\theoremstyle{plain}
\newtheorem{theorem}{Theorem}[section]
\newtheorem{corollary}[theorem]{Corollary}
\newtheorem{lemma}[theorem]{Lemma}
\newtheorem{example}[theorem]{Example}
\newtheorem{proposition}[theorem]{Proposition}
\newtheorem{definition}[theorem]{Definition}
\newtheorem{assumption}[theorem]{Assumption}
\theoremstyle{remark}
\newtheorem{remark}[theorem]{Remark}
\numberwithin{equation}{section}
\newcommand{\RR}{\mathbb{R}}
\newcommand{\NN}{\mathbb{N}}
\newcommand{\lzs}{{{L}}^0}
\newcommand{\lo}{{{L}}^1(\massP)}
\newcommand{\fps}{(\Omega,\cF,(\cF_t)_{\tT},\massP)}
\newcommand{\massP}{\mathbf{P}}
\newcommand{\massE}{\mathbb{E}}
\newcommand{\p}{\massP}
\newcommand{\la}{\lambda}
\newcommand{\cA}{\mathcal{A}}
\newcommand{\cB}{\mathcal{B}}
\newcommand{\cC}{\mathcal{C}}
\newcommand{\cD}{\mathcal{D}}
\newcommand{\cF}{\mathcal{F}}
\newcommand{\cZ}{\mathcal{Z}}
\newcommand{\wtcD}{\widetilde {\mathcal{D}}}
\renewcommand{\i}{\infty}
\newcommand{\tT}{0 \leq t \leq T}
\newcommand{\wt}[1]{{\widetilde{#1}}}
\newcommand{\wh}[1]{{\widehat{#1}}}
\newcommand{\lims}[2]{ \lim_{#1 \to \i} #2_{#1}}
\newcommand{\seq}[2]{\{{#1}^{#2}\}_{#2=1}^\infty}
\newcommand{\sequ}[2]{\{{#1}_{#2}\}_{#2=1}^\infty}
\newcommand{\seqnet}[3]{\{{#1}_{#2}\}_{#2 \in #3}}
\newcommand{\Econd}[2]{\massE\left[\left.#1\right|#2\right]}        
\newcommand{\Ex}[2]{\massE^{#1}\left[#2\right]}                     
\newcommand{\Excond}[3]{\massE^{#1}\left[\left.#2\right|#3\right]}  
\newcommand{\Es}[1]{\massE \left[ #1 \right]}
\providecommand{\keywords}[1]{\textbf{{Keywords.}} #1}
\providecommand{\subjclass}[1]{\textbf{{MSC2010.}} #1}
\begin{document}

\title{Utility maximization problem under transaction costs: optimal dual processes and stability\thanks{The authors gratefully acknowledge financial support from the Austrian Science Fund (FWF) under grant P25815 and from the European Research Council under ERC Advanced Grant 321111 and from the University of Vienna under short-term grand abroad (KWA). This work is partially done during the visit of L. Gu and J. Yang at CMAP, \'Ecole Polytechnique, hosted by Prof. N. Touzi, who are grateful acknowledged.} }
\author[1,3]{Lingqi Gu}
\author[2]{Yiqing Lin\thanks{corresponding author: yiqing.lin@polytechnique.edu}}
\author[2]{Junjian Yang}
\affil[1]{\small Fakult\"at f\"ur Mathematik, Universit\"at Wien, Oskar-Morgenstern Platz 1, A-1090 Wien, Austria}
\affil[2]{\small Centre de Math\'ematiques Appliq\'ees, \'Ecole Polytechnique, F-91128 Palaiseau Cedex, France}
\affil[3]{\small College of Mathematics and informatics, Fujian Normal University, 350117 Fuzhou, China}
\date{\small \today}

\maketitle

\begin{abstract}\noindent
\indent This paper discusses the num\'eraire-based  utility maximization problem in markets with proportional transaction costs. 
In particular, the investor is required to liquidate all her position in stock at the terminal time. 
We first observe the stability of the primal and dual value functions as well as the convergence of the primal and dual optimizers when perturbations occur on the utility function and on the physical probability. We then study the properties of the optimal dual process (ODP), that is, a process from the dual domain that induces the optimality of the dual problem. When the market is driven by a continuous process, we construct the ODP for the problem in the limiting market by a sequence of ODPs corresponding to the problems with small misspecificated parameters. Moreover, we prove that this limiting ODP defines a shadow price.

\end{abstract}

\noindent \keywords{Utility maximization problem, Transaction costs, Stability, Optimal dual processes, Shadow price processes}\\

\noindent \subjclass{91G80, 93E15, 60G48}

\maketitle

\section{Introduction}

The utility maximization problem with constant proportional transaction costs is thoroughly studied recently. In this paper, we restrict ourselves to consider such a problem with a num\'eraire-based general model, in which the investor aims to solve the maximization problem on the expected utility over her terminal wealth. In this market, the stock price is driven by a process $S$ which is not necessarily a semimartingale. However, the investor buys the stock at the price $S$ but receives only $(1-\lambda)S$ when selling them, where $\lambda$ denotes the constant proportional transaction costs. The investor trades the stock $S$ with admissible strategies and is required to liquidate all her position in stock at the terminal time $T$. The utility maximization problem on the positive half-line is formulated with a set $\mathcal{A}(x)$ of admissible trading strategies:
 \begin{equation}\label{eq019}
   \massE\big[U\big(V_T^{liq}(\varphi)\big)\big] \to \max!, \qquad \varphi=(\varphi^0, \varphi^1)\in\mathcal{A}(x), \\
 \end{equation}
 where $\varphi^0$ and $\varphi^1$ denote her position in bond and in stock, respectively. \\
 
 For frictionless market models beyond semimartingales, in general, arbitrage opportunities exist, so that the utility maximization problem could not be wellposed. 
  However, the presence of transaction costs could exclude such strategies: for example, Guasoni \cite{Gua06} has showed that the fractional Black-Scholes model is arbitrage-free when arbitrarily small proportional transaction costs were taken into consideration. In this case, optimal strategies can be found for the maximization problem with a finite indirect utility function \cite{Gua02}. With proportional transaction costs $\lambda$, the duality theory for utility maximization dates back to the seminal work of \cite{CK96, CW01} when $S$ is driven by an It\^o process, which has been afterward extended by \cite{CS16duality, CSY17} to the general framework when the utility function supports only the positive half-line. Besides, we refer the reader to \cite{DPT01, Bou02, BM03, CO11, Yu17} in the similar context but with possibly multivariate utility functions.\\
  
In \cite{CS16duality}, the solution couple $\wh\varphi=(\wh\varphi^0, \wh\varphi^1)$ for the primal problem is associated with the solution of the following dual problem
  \begin{equation}\label{eq029}
   \massE[V(yY^0_T)] \to \min!, \qquad Y=(Y^0, Y^1)\in\mathcal{B}(1). \\
  \end{equation}
Indeed, the problem (\ref{eq019}) is solved under the assumption of the existence of {\it consistent price systems} introduced in \cite{JK95}. Precisely, a consistent price system is a couple of processes $Z=(Z^0, Z^1)$ consisting of a strictly positive martingale $Z^0$ and a positive local martingale $Z^1$ such that 
  $$
  S^Z_t:=\frac{Z^1_t}{Z^0_t}\in [(1-\lambda)S, S],\qquad 0\leq t\leq T, \quad {\rm a.s.}
  $$ 
  These processes take the role of ``{\it equivalent local martingale measures}'' (or  ``{\it equivalent local martingale deflators}'') when the same problem is considered under the assumptions of (NFLVR) (or (NUPBR)) in a frictionless  market. Instead of the Fatou limit argument in \cite{KS99}, the domain $\mathcal{B}(1)$ of the dual problem (\ref{eq029}) is a collection of all optional strong supermartingales which are limits of consistent prices systems in the sense \cite{CS16strong} 
\begin{equation}\label{eq039}
 Z^n_\tau\longrightarrow Y_\tau, \qquad {\rm in}\ \massP,\qquad {\rm for\ all}\ 0\leq \tau\leq  T.
 \end{equation}
 We point out that the duality result in \cite{CS16duality, CSY17} is not only static but also dynamic, which means that one can find an {\it optimal dual process} (for short, ODP) $\widehat{Y}=\big(\widehat Y^0, \widehat Y^1\big)$ in $\mathcal{B}(1)$ whose first component $Y^0$ attains the optimality of (\ref{eq029}). In addition, if the process 
 $\widehat{Y}^0$ is a local martingale, then a {\it shadow market} can be defined as in \cite{CS16duality} via 
 $$\widehat{S}:=\frac{\widehat{Y}^1}{\widehat{Y}^0}.$$  
 
 Conceptually, the so-called shadow market is a frictionless market driven by some price process $\widetilde{S}$ lying between the bid-ask spread $[(1-\lambda)S, S]$. If the investor trades with $\widetilde{S}$ frictionlessly instead of with $S$ under transaction costs, the utility maximization problem (\ref{eq019}) is solved by the same trading strategy which yields the same optimality. As far back as the work \cite{CK96}, the relation between the ODP and the {\it shadow price process} has been investigated. It is worth mentioning that such process always exists if the probability space is finite, see \cite{KMK11}. However, it could fail to exist in a general context (see counterexamples in \cite{BCKMK13, CMKS14}). Concerning the utility maximization problem (\ref{eq019}), the sufficient conditions for the existence of a local martingale type ODP are studied in \cite{CSY17, CPSY16}, so that in their cases, a shadow price can be constructed by the result of \cite{CS16duality}.\\

  In the present paper, we concern ourselves with the sensitivity of the duality result obtained in \cite{CS16duality, CSY17}. In particular, we consider the question: \\
  
  {\it\indent How the small changes on input parameters (initial wealth, the inverstor's preferences, etc...) exert influence on the optimal strategy and the optimality of the problem (\ref{eq019})?\\} 

In the framework without transaction costs, the research on the stability of a problem similar to (\ref{eq019}) dates back to \cite{JN04} (see also \cite{CR07}), in which Jouini and Napp have proved the $L^p$-convergence ($p \geq 1$) of the optimal investment-consumption strategy when the utility function was perturbed in the It\^o-process market. Afterwards, this result has been generalized by Larsen \cite{Lar09} to consider a similar problem in the general continuous semimartingale market. Similar to \cite{JN04}, Larsen has imposed a growth condition on the sequence of utility functions. On the other hand, if only the pathwise convergence of the utility functions was assumed, he has a counterexample for its non-sufficiency to deduce the continuity of the value function.
 (See also \cite{KZ11} for detailed discussion on this uniform integrability (UI) type condition). In \cite{KZ11}, Kardaras and \v{Z}itkovi\'{c} allowed variations not only on the preferences of the investor but also on the subjective probabilities and have obtained the $L^0$-stability for the primal and dual optimizers and the continuity of the value functions and their derivatives. More recently, Xing has investigated this problem in \cite{Xin17} when the target exponential utility function is approximating by functions defined either on $\mathbb{R}$ or on the half-line. 
Another type of stability problem is studied in \cite{LZ07, Fre13, BK13, Wes16} (see also \cite{MW13}), where misspecifications of the model are denoted by variations of the coefficients for the risky process. \\

The present paper is organized as follows.
In section 2 we introduce the formulation of the model and the duality result for solving the problem (\ref{eq019}) with a c\`adl\`ag price process $S$. 
Then, we conduct a sensitivity analysis in Section 3, in which we adapt the (UI) condition in \cite{KZ11} to the context under transaction costs. For a sequence of perturbed initial wealth, of the investor's preferences and of the market's subjective probabilities, we find the $L^0$-stability of the primal and dual optimizers and the continuity of the primal and dual value functions along with their derivatives. Section 4.1 devotes to the study on the dual domain of the problem (\ref{eq029}) and the property of ODPs, which complements the results in \cite{CS16duality, CSY17}. In particular, we show that for $Y\in \mathcal{B}(1)$, the first component $Y^1$ is a local martingale if and only if $Y^0$ is a local martingale. Moreover, we provide a simple explanation of why the ODP must be a local martingale in \cite{CSY17} when $S$ is continuous and the running liquidation value of the optimal wealth process is bounded away from 0 (see (\ref{spodp})). Another important observation is that the optimal dual process is not unique, which is in contrast to the result in the classical frictionless theory. Thanks to W. Schachermayer, a counterexample is constructed with a simple time-changed geometric Brownian motion. 
In addition to the static stability result in Section 3, we investigate moreover the so-called dynamic stability in Section 4.2 based on our studies on ODPs in the previous subsection. Mathematically, if $\{\wh Y^n\}_{n \in \NN}$ of ODPs associated with the perturbed problems, then there exists at least a convex combination of $\{\wh Y^n\}_{n \in \NN}$ that admits a limit $\widehat{Y}(x; U,\p)$ in the sense of (\ref{eq039}). This limit must be an ODP of the limiting problem. When the situation of \cite{CSY17, CPSY16} is under consideration, the dynamic stability result suggests a possible way for constructing a shadow price process for the limiting problem from the ones of the perturbed problems, which is studied in Section 4.3.

\section{Formulation of the utility maximization problem}
In this section, we introduce a num\'eraire-based market model with transaction costs and existing duality results obtained in \cite{CS16duality}. \\

Fix a finite time horizon $T > 0$. The market involves proportional transaction costs  $0<\lambda< 1$,
in other words, a financial agent has to buy stock shares at the higher ask price $S_t$, and only receives a lower bid price $(1-\la)S_t$ when selling them.
\begin{assumption}\label{Slb}
 The stock price process  $S=(S_t)_{0 \leq t \leq T}$ is strictly positive, c\`adl\`ag, adapted and based on a filtered probability space $\fps$ satisfying the usual conditions of right continuity and saturatedness.  Additionally, we assume $\mathcal{F}_{T-}=\mathcal{F}_T$ and
 $S_{T-} =S_T $.
\end{assumption}
\begin{remark}
 The additional assumptions $\mathcal{F}_{T-}=\mathcal{F}_T$ and
 $S_{T-} =S_T $ are to avoid special notation for possible trading at the terminal time $T$. That is, without loss of generality, we assume that  the price $S$ does not jump at the terminal time $T$, while the investor can still liquidate her position in the stock shares, so that we may let $\varphi_T^1=0.$ For more details on these assumptions we refer to,  e.g., \cite[Remark 4.2]{CS06} or \cite[p.~1895]{CS16duality}.
\end{remark}

\begin{definition}
 A trading strategy, modeling the holdings in units of the bond and of the stock, is an $\mathbb{R}^2$-valued, predictable, finite variation process $\varphi = (\varphi^0_t,\varphi^1_t)_{0\leq t\leq T}$
   such that the following {\it self-financing} constraint is satisfied:
  \begin{equation*} 
   \int_s^td\varphi_u^0 \leq -\int_s^tS_ud\varphi_u^{1,\uparrow}+ \int_s^t(1-\lambda)S_ud\varphi_u^{1,\downarrow},
  \end{equation*}
  for all $0\leq s<t\leq T$, where $\varphi^1=\varphi^{1,\uparrow}-\varphi^{1,\downarrow}$ denotes the canonical decompositions of $\varphi^1$ into the difference of two increasing processes. 
\end{definition}

\begin{remark}
 Any finite variation process $\varphi$ is l\`adl\`ag. 
 As pointed out in \cite{CSY17}, we may assume trading strategies to be c\`adl\`ag, if the price process is continuous.
\end{remark}

 \begin{definition}  \label{strategy}
 Fix the level $0<\lambda<1$ of transaction costs. For trading strategy $\varphi = (\varphi^0_t,\varphi^1_t)_{0\leq t\leq T}$, we define its {\normalfont liquidation value} $V_t^{liq}(\varphi)$ at time $t\in [0, T]$ by
 \begin{equation*} 
   V_t^{liq}(\varphi) := \varphi^0_t+(\varphi_t^1)^+(1-\la)S_t-(\varphi_t^1)^-S_t.
 \end{equation*}

 A trading strategy $\varphi$ is called {\normalfont admissible,} if $V_t^{liq}(\varphi)\geq 0$ for all $0\leq t\leq T$. \\
 
 For $x>0$, we denote by $\mathcal{A}(x)$ the set of all admissible, self-financing trading strategies $\varphi = (\varphi^0_t,\varphi^1_t)_{0\leq t\leq T}$, 
   starting with the initial endowment $(\varphi_{0}^0,\varphi_{0}^1)=(x,0)$.
 Denote by $\mathcal{C}(x)$ the convex subset of terminal liquidation values
  \begin{equation*} 
   \mathcal{C}(x):= \left\{V^{liq}_T(\varphi):\,\,\varphi\in\mathcal{A}(x)\right\}\subseteq L_+^0(\massP),
  \end{equation*}
  which equals the set $\left\{\varphi_T^0:\,\,\varphi=(\varphi^0,\varphi^1)\in\mathcal{A}(x),\ \varphi_T^1=0 \right\}$ as defined in \cite{CS16duality}.
\end{definition}

We use a utility function to model the agent's preferences.
\begin{assumption}\label{utifunc}
Let \ $U:\RR_+\to\RR$ be a strictly increasing, strictly concave and smooth function, satisfying the 
Inada conditions $U'(0)=\infty$ and $U'(\infty)=0$, 
as well as the condition of ``reasonable asymptotic elasticity" introduced in \cite{KS99}, i.e., 
\begin{equation*}
 \textnormal{AE}(U):= \limsup\limits_{x\to\i} \frac{xU'(x)}{U(x)} < 1.
\end{equation*}
\end{assumption}

Given the initial endowment $x>0$, the agent wants to maximize his {\it expected utility} at the terminal time $T$:
 \begin{equation}\label{J5}
   \massE\big[U\big(V_T^{liq}(\varphi)\big)\big] \to \max!, \qquad \varphi\in\mathcal{A}(x). \\
 \end{equation}

In order to avoid the trivial case, we have the following assumption. 
\begin{assumption}\label{uf}
 Suppose that $$ \sup_{ g\in\mathcal{C}(x)}\massE[U(g)] <\i,$$ for some $x>0.$ 
\end{assumption}
 
The duality theorem on the utility maximization in \cite{CS16duality} requires an assumption on the existence of consistent prices systems, which is  similar to the existence of equivalent martingale measures or some similar weaker conditions for frictionless markets.

\begin{definition} \label{CPS}
Fix $0<\la <1$.  
A \textnormal{$\la$-consistent price system} is a two dimensional strictly positive process $Z=(Z^0_t,Z^1_t)_{0\le t\le T}$ 
 with $Z^0_0=1$, that consists of a martingale $Z^0$ and a local martingale $Z^1$ under $\massP$ such that 
  \begin{equation*} 
    \widetilde{S}_t:=\frac{Z^1_t}{Z^0_t} \in [(1-\la)S_t, S_t],\qquad 0\leq t\leq T,\qquad {\rm a.s.} 
  \end{equation*}
We denote by $\mathcal{Z}^{\lambda} (S)$ the set of $\la$-consistent price systems and we say that $S$ satisfies the condition $(CPS^\la)$ of {admitting a $\la$-consistent price system}, if $\mathcal{Z}^{\lambda}(S)$ is nonempty. 
\end{definition}

\begin{definition}
We say that $S$ satisfies the condition $(CPS^\la)$ \textnormal{locally}, 
if there exists a strictly positive process $Z$ and a sequence $\{\tau_n\}_{n\in\NN}$ of $[0,T] \cup\{\i\}$-valued stopping times, increasing to infinity, 
such that each stopped process $Z^{\tau_n}$ defines a $\lambda$-consistent price system for the stopped process $S^{\tau_n}$. We call this processes $Z$ local $\lambda$-consistent price system and denote by $\mathcal{Z}^{loc,\lambda}$ the set of all such processes.  
\end{definition}

\begin{assumption}\label{localCPS}
 The stock price process $S$ satisfies $(CPS^{\mu})$ locally,  for all $0 < \mu < \lambda$.
\end{assumption}

\begin{definition}[Optional strong supermartingale]
A real-valued stochastic process $X = (X_t)_{\tT}$ is called optional strong
supermartingale, if
\begin{enumerate}[(1)]
 \item  $X$ is optional;
 \item  $X_{\tau}$ is integrable for every $[0, T]$-valued stopping time $\tau$;
 \item  For all stopping times $\sigma$ and $\tau$ with  $ 0 \leq \sigma  < \tau \leq T$, we have
         $$X_{\sigma} \geq \Econd{X_{\tau}}{\cF_{\sigma}}.$$
\end{enumerate}
\end{definition}

\begin{remark}
The notion of such processes introduced by Mertens in \cite{Mer72} is a generalization of  c\`adl\`ag supermartingales.  The readers are also referred to \cite[Appendix I]{DM82} for more properties of these processes. 
\end{remark}

\begin{definition} \label{defB}
We denote by $\cB(y)$ the set of all optional strong supermartingale deflators, which are pairs of nonnegative {{optional strong supermartingales}} $Y=(Y^0_t,Y^1_t)_{0\leq t\leq T}$
 such that $Y^0_0=y$, $Y^1=Y^0\widetilde S$ for some $[(1-\lambda)S,S]$-valued process $\widetilde S = (\widetilde S_t)_{0\leq t\leq T}$, 
 and $Y^0(\varphi^0+ \varphi^1 \wt S)= Y^0\varphi^0+Y^1\varphi^1$ is a nonnegative
optional strong supermartingale for all $(\varphi^0,\varphi^1)\in\mathcal{A}(1)$. 
Accordingly, define
\begin{align*} 
\cD(y)=\Big\{ Y^0_T:\, (Y^0,Y^1)  \in  \cB(y)\Big\},  \quad \mbox{for} \ y >0.
\end{align*}
\end{definition}

\begin{remark}
In the frictionless case, we do not need to pass to the optional version of supermartingale deflator, since the wealth process associated with a trading strategy is always c\`adl\`ag as a stochastic integral, which is not necessarily true in the new context. 
\end{remark}

The polarity between the sets $\mathcal{C}:=\cC(1)$ and $\mathcal{D}:=\mathcal{D}(1)\subseteq L^0_+(\massP)$ are established in \cite[Lemma A.1]{CS16duality}, 
  which means that $\mathcal{C}$ and $\mathcal{D}$ in this new context satisfy verbatim the conditions listed in \cite[Proposition 3.1]{KS99}. 
As a result, the following duality result in \cite{CS16duality} is straightforward by following the lines of \cite{KS99}.

\begin{theorem}[Duality Theorem, {\cite[Theorem 3.2]{CS16duality}}]\label{thhurt} 
Under Assumptions \ref{Slb}, \ref{utifunc}, \ref{uf}, \ref{localCPS},
define the primal and dual value functions as
\begin{align}
  u(x):=\sup\limits_{g\in\mathcal{C}(x)} \massE[U(g)], \notag \\
  v(y):=\inf\limits_{h\in\mathcal{D}(y)} \massE[V(h)], \label{geht}
\end{align}
where  $$V(y):= \sup_{x>0}\{U(x)-xy\}, \quad y>0, $$ 
 is the conjugate function of $U$. 
 
Then, the following statements hold true.
\begin{enumerate}[(i)]
 \item The functions $u(x)$ and $v(y)$ are finitely valued, for all $x$, $y>0$, and mutually conjugate
        \begin{equation*}
             v(y)=\sup\limits_{x>0} [u(x)-xy], \quad u(x) =\inf\limits_{y>0} [v(y)+xy].
        \end{equation*}
       The functions $u$ and $v$ are continuously differentiable and strictly concave (respectively, convex) and satisfy
        \begin{equation*}
             u'(0) =-v'(0)=\i, \qquad u'(\i)=v'(\i)=0.
        \end{equation*}
        
 \item For all $x$, $y>0$, the solutions $\widehat{g}(x)\in\mathcal{C}(x)$  and $\widehat{h}(y)\in\mathcal{D}(y)$  exist, 
        are unique and take their values a.s.~in $(0,\i)$.
       There are $\big(\widehat{\varphi}^0(x),\widehat{\varphi}^1(x)\big)\in\mathcal{A}(x)$ and $\big(\widehat{Y}^0(y),\widehat{Y}^1(y)\big)\in\mathcal{B}(y)$
        such that 
   \begin{align}\label{optimizer}
      V_T^{liq}\big(\widehat{\varphi}(x)\big)=\widehat\varphi^0_T=\widehat{g}(x)\quad\mbox{and}\quad \widehat{Y}^0_T(y)=\widehat{h}(y). 
   \end{align}

 \item  If $x>0$ and $y>0$ are related by $u'(x)=y$, or equivalently by $x=-v'(y)$, 
        then $\widehat{g}(x)$ and $\widehat{h}(y)$ are related by the first order conditions
        \begin{equation} \label{Y=U'(X)}
            \widehat{h}(y)=U'\big(\widehat{g}(x)\big) \quad\mbox{and}\quad \widehat{g}(x) = -V'\big(\widehat{h}(y)\big),
        \end{equation}  
        and we have 
        \begin{equation*} \label{xy=EXY}
           \massE\big[\widehat{g}(x)\widehat{h}(y)\big]=xy.
        \end{equation*}           
        In particular, $\widehat{\varphi}^0(x)\widehat{Y}^0(y)+\widehat{\varphi}^1(x)\widehat{Y}^1(y)$ is a $\massP$-martingale for all  $\big(\widehat{\varphi}^0(x),\widehat{\varphi}^1(x)\big)\in\mathcal{A}(x)$ and $\big(\widehat{Y}^0(y),\widehat{Y}^1(y)\big)\in\mathcal{B}(y)$ satisfying \eqref{optimizer}.
\item Finally, we have $$v(y)= \inf_{(Z^0,Z^1)\in \cZ^{loc,\lambda}}\Es{V(yZ^0_T)}.$$
\end{enumerate}
\end{theorem}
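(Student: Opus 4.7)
Parts~(i)--(iii) follow the abstract bipolar framework of Kramkov--Schachermayer. The key ingredient, proved as Lemma~A.1 of \cite{CS16duality}, is the bipolar relation between $\mathcal{C}(1)$ and $\mathcal{D}(1)$: both sets are convex, solid, closed in probability, and satisfy $\massE[gh]\leq 1$ for all $g\in\mathcal{C}(1)$ and $h\in\mathcal{D}(1)$, each being the polar of the other in $L^0_+$. Combining this with Assumption~\ref{utifunc} (Inada conditions and reasonable asymptotic elasticity) and the finiteness Assumption~\ref{uf}, the proof of \cite[Theorem~3.2]{KS99} applies verbatim and delivers the finiteness, mutual conjugacy, strict concavity/convexity, smoothness and Inada-type limits of $u$ and $v$, together with the existence, uniqueness and strict positivity of the optimizers $\widehat g(x)\in\mathcal{C}(x)$ and $\widehat h(y)\in\mathcal{D}(y)$ and the first-order relations~(\ref{Y=U'(X)}). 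The lifts to $\widehat\varphi(x)\in\mathcal{A}(x)$ with $V^{liq}_T(\widehat\varphi)=\widehat g(x)$ and to $\widehat Y(y)\in\mathcal{B}(y)$ with $\widehat Y^0_T(y)=\widehat h(y)$ are then immediate from the defining constructions of $\mathcal{C}(x)$ and $\mathcal{B}(y)$; the martingale property at the end of~(iii) follows from the equality case in the supermartingale deflator inequality combined with $\massE[\widehat g(x)\widehat h(y)]=xy$.

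For part~(iv), the plan is to prove the two inequalities separately. To show $v(y)\leq\inf_{Z\in\mathcal{Z}^{loc,\lambda}}\massE[V(yZ^0_T)]$, I would verify that $(yZ^0,yZ^1)\in\mathcal{B}(y)$ for every $Z=(Z^0,Z^1)\in\mathcal{Z}^{loc,\lambda}$, so that $yZ^0_T\in\mathcal{D}(y)$ and hence $\massE[V(yZ^0_T)]\geq v(y)$. Indeed, $yZ^0$ and $yZ^1$ are nonnegative c\`adl\`ag local martingales along the defining localizing sequence, hence supermartingales, in particular optional strong supermartingales. The required deflator property, namely that $yZ^0\varphi^0+yZ^1\varphi^1=yZ^0(\varphi^0+\varphi^1\widetilde S)$ is a nonnegative supermartingale for each $\varphi\in\mathcal{A}(1)$, follows from stochastic integration by parts combined with the self-financing constraint and $\widetilde S=Z^1/Z^0\in[(1-\lambda)S,S]$, after localization along the stopping times of the local CPS.

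For the reverse inequality I would start from the dual optimizer $\widehat h(y)=\widehat Y^0_T(y)\in\mathcal{D}(y)$ provided by~(ii). Since $\mathcal{B}(y)$ is built as a Fatou-type closure of $\lambda$-consistent price systems (cf.\ \cite{CS16strong,CS16duality}) and $\mathcal{Z}^{loc,\lambda}$ is convex, a Koml\'os-type argument produces a sequence $Z^n\in\mathcal{Z}^{loc,\lambda}$ with $yZ^{0,n}_T\to\widehat h(y)$ almost surely. To push the convergence through $\massE[V(\cdot)]$, I would linearize $V$ via convexity, writing $V(x)\geq V(c)+V'(c)(x-c)$ so that $W(x):=V(x)-V'(c)\,x$ is bounded below, apply Fatou's lemma to $W(yZ^{0,n}_T)$, and handle the linear correction using the supermartingale bound $\massE[yZ^{0,n}_T]\leq y$. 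Combining this with the convex majorisation $\massE[V(yZ^{0,n}_T)]\leq\sum_k a^n_k\,\massE[V(yZ^{0,k}_T)]$ for $Z^{0,n}$ realised as a convex combination drawn from a minimising sequence yields $v(y)=\massE[V(\widehat h(y))]\leq\liminf_n\massE[V(yZ^{0,n}_T)]\leq\inf_{Z\in\mathcal{Z}^{loc,\lambda}}\massE[V(yZ^0_T)]$.

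The main obstacle will be this last step: $V$ is not bounded below in general, and the approximating sequence $(yZ^{0,n}_T)$ is only $L^1$-bounded rather than uniformly integrable, so the transfer of the infimum from $\mathcal{D}(y)$ down to $\mathcal{Z}^{loc,\lambda}$ forces the careful interplay of convexity of $V$, the Koml\'os selection inside the convex set $\mathcal{Z}^{loc,\lambda}$, and the linearisation trick above; the $L^\infty$--boundedness and reasonable asymptotic elasticity of $U$ (Assumption~\ref{utifunc}) are what ultimately guarantee that the linear correction $V'(c)\,\massE[yZ^{0,n}_T]$ can be controlled in the passage to the limit.
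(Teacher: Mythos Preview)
Your proposal is correct and matches the paper's approach: the paper does not prove this theorem at all but simply cites it as \cite[Theorem~3.2]{CS16duality}, noting (just before the statement) that the bipolar relation between $\mathcal{C}$ and $\mathcal{D}$ established in \cite[Lemma~A.1]{CS16duality} puts one verbatim into the abstract framework of \cite[Proposition~3.1]{KS99}, so that the arguments of \cite{KS99} go through unchanged. Your sketch of parts~(i)--(iii) is exactly this, and your more detailed plan for part~(iv) is the standard route indicated in the remark following the theorem (compare \cite[Proposition~3.2]{KS99} and \cite[Appendix~A]{CS16duality}); the paper itself gives no further argument.
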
 

\begin{remark}
Similar to \cite[Proposition 3.2]{KS99} (see also \cite[Appendix A]{CS16duality}), the infimum of $v(y)$ defined in (\ref{geht}) could be approximated by the element from $\{Z^0_T: Z\in \mathcal Z^{loc, \lambda}\}$, i.e., 
\begin{equation}\label{vequiv}
v(y):=\inf\limits_{h\in\mathcal{D}(y)} \massE[V(h)]=\inf\limits_{Z\in\mathcal{Z}^{loc, \lambda}} \massE[V(yZ^0_T)].
\end{equation}
\end{remark}

We clarify here notations for our argument of stability in the following sections. For the utility maximization problem in the market under the physical probability $\massP$, we employ
$$\cB(y)= \cB(y, \p),\ \cD(y)= \cD(y, \p),\  \cZ^\lambda= \cZ^\lambda(\p),\  \cZ^{loc,\lambda}= \cZ^{loc,\lambda}(\p).$$

Moreover, we call the optional strong supermartingale deflator $\big(\widehat{Y}^0(y),\widehat{Y}^1(y)\big)\in\mathcal{B}(y)$ inducing $\widehat{Y}^0_T(y)=\widehat{h}(y)$ {\it optimal dual process (ODP)}.  
Sometimes, we use the calibrated processes $\wh Y:=(\wh Y^0, \wh Y^1)\in \mathcal B(1)$, and by abuse of definition, we still call it ODP. 
Similarly, we call the trading strategy $\big(\widehat{\varphi}^0(x),\widehat{\varphi}^1(x)\big)\in\mathcal{A}(x)$ satisfying $V_T^{liq}\big(\widehat{\varphi}(x)\big)=\widehat{g}(x) $ {\it optimal primal process (OPP)}.\\

When solving a utility maximization problem in the frictional market, we often wonder whether this market can be replaced by a frictionless market that yields the same optimal strategy and
utility. Such frictionless market is called shadow market for the utility maximization problem. 

\begin{definition} \label{ShadowPriceDef}
A semimartingale $\widetilde{S}=(\widetilde{S}_t)_{0\le t\le T}$ is called \textnormal{shadow price process} 
  for the optimization problem \eqref{J5} if
 \begin{enumerate}[(i)]
   \item $\widetilde{S}$ takes its values in the bid-ask spread $[(1-\la)S, S]$. 
  
   \item     
   The solution $\wt \varphi=(\wt \varphi^0,  \wt \varphi^1)$ to the corresponding frictionless utility maximisation problem 
      $$\massE\big[U\big(x+\varphi^1 \cdot \wt S_T\big)\big] \to max!, \quad (\varphi^0, \varphi^1)\in \cA(x; \wt S)$$ 
     exists and coincides with the solution $\wh \varphi= (\wh \varphi^0, \wh \varphi^1)$ to \eqref{J5} under transaction costs, where $\cA(x; \wt S)$ denotes the set of all self-financing and admissible trading strategies for the price $\widetilde{S}$ without transaction costs in the classical sense as in \cite{KS99}.
 \end{enumerate}
\end{definition}

If an ODP for the dual problem of the utility maximization satisfies appropriate conditions, then a shadow price process can be constructed by this ODP. This is concluded in \cite{CS16duality}.
 
\begin{proposition}[{\cite[Proposition 3.7]{CS16duality}}]  \label{shadow}
For a fixed $x>0$, suppose that all conditions for Theorem \ref{thhurt} hold. 
Assume that the dual optimizer $\widehat{h}(y)$ equals $\widehat{Y}^0_T(y)$, 
where $u'(x)=y$ and $\widehat{Y}(y)\in\mathcal{B}(y; \massP):=(\widehat{Y}^0(y), \widehat{Y}^1(y))$ is a couple of \textnormal{$\massP$-local martingale}.
Then, the strictly positive semimartingale $\widehat{S}:=\frac{\widehat{Y}^1(y)}{\widehat{Y}^0(y)}$ is a {\it shadow price process} 
 for the optimization problem \eqref{J5}.
\end{proposition}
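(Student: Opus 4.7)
The plan is to verify the two clauses of Definition \ref{ShadowPriceDef} in turn. For clause (i), I would simply unpack Definition \ref{defB}: by construction, any $\widehat Y \in \cB(y)$ satisfies $\widehat Y^1 = \widehat Y^0 \widetilde S$ for a $[(1-\lambda)S, S]$-valued process $\widetilde S$, so $\widehat S := \widehat Y^1/\widehat Y^0 = \widetilde S$ automatically takes values in the bid-ask spread. The semimartingale property of $\widehat S$ follows from the hypothesis that both $\widehat Y^0$ and $\widehat Y^1$ are $\p$-local martingales together with $\widehat Y^0 > 0$ (apply It\^o's formula to $(\widehat Y^0)^{-1}\widehat Y^1$).

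For clause (ii), the central observation is that $\widehat Y^0$ serves as a local martingale deflator for the frictionless market with price $\widehat S$, since $\widehat Y^0 \widehat S = \widehat Y^1$ is by assumption a local martingale. Denote by $\widetilde u$, $\widetilde v$ the primal/dual value functions of the frictionless utility maximization problem with price process $\widehat S$. Since $\widehat Y^0$ is a deflator, $\widetilde v(y)\le \massE[V(y\widehat Y^0_T)]=\massE[V(\widehat h(y))]=v(y)$. The frictionless duality of \cite{KS99} yields $\widetilde u(x)\le \widetilde v(y)+xy\le v(y)+xy = u(x)$, where the last identity uses Theorem \ref{thhurt}(i) with $u'(x)=y$.

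To close the argument, I would show that the transaction-cost OPP $\widehat\varphi=(\widehat\varphi^0,\widehat\varphi^1)$ is admissible for the frictionless market with $\widehat S$ and produces the same terminal wealth there as its liquidation value. Since $\widehat\varphi^1_0=\widehat\varphi^1_T=0$, integration by parts gives
\begin{equation*}
x+(\widehat\varphi^1\sint \widehat S)_T = x - \int_0^T \widehat S_u\, d\widehat\varphi^{1,\uparrow}_u + \int_0^T \widehat S_u\, d\widehat\varphi^{1,\downarrow}_u,
\end{equation*}
while the self-financing inequality under transaction costs combined with $(1-\lambda)S\le \widehat S\le S$ yields the pointwise bound $\widehat\varphi^0_T \le x+(\widehat\varphi^1\sint \widehat S)_T$, hence $V^{liq}_T(\widehat\varphi)\le x+(\widehat\varphi^1\sint \widehat S)_T$. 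To promote this to equality I invoke the complementary-slackness relation from Theorem \ref{thhurt}(iii): $\massE[\widehat Y^0_T\, V^{liq}_T(\widehat\varphi)]=xy$, whereas $\widehat Y^0(x+\widehat\varphi^1\sint \widehat S)$ is a nonnegative local martingale, hence a supermartingale, giving $\massE[\widehat Y^0_T(x+(\widehat\varphi^1\sint \widehat S)_T)]\le xy$. The two bounds together force $V^{liq}_T(\widehat\varphi)=x+(\widehat\varphi^1\sint \widehat S)_T$ almost surely, so $\widetilde u(x)\ge u(x)$; combined with the previous paragraph, $\widetilde u(x)=u(x)$ and $\widehat\varphi$ is optimal in the frictionless market. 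Strict concavity of $U$ delivers uniqueness.

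The delicate step is the pathwise identification of $V^{liq}_T(\widehat\varphi)$ with the frictionless terminal wealth: it requires a careful integration by parts for the predictable finite-variation process $\widehat\varphi^1$ against the semimartingale $\widehat S$ (with attention to jumps at predictable vs.\ totally inaccessible times), and it is precisely the martingale property in Theorem \ref{thhurt}(iii) that pins down where $\widehat\varphi$ trades, forcing purchases to occur on $\{\widehat S=S\}$ and sales on $\{\widehat S=(1-\lambda)S\}$. Once the pathwise identity is established, the duality sandwich closes the proof immediately.
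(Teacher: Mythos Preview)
The paper does not give a proof of this proposition: it is simply quoted from \cite{CS16duality} as Proposition~3.7 there, so there is no in-paper argument to compare against. Your sketch is essentially the standard proof of that result in \cite{CS16duality}: verify the bid--ask inclusion directly from Definition~\ref{defB}, use the local martingale property of $\widehat Y$ to recognise $\widehat Y^0$ as a supermartingale deflator for the frictionless $\widehat S$-market, and then close a duality sandwich $\widetilde u(x)\le v(y)+xy=u(x)$ together with the reverse inequality obtained by showing that the transaction-cost optimizer $\widehat\varphi$ is an admissible frictionless strategy achieving $V^{liq}_T(\widehat\varphi)=x+(\widehat\varphi^1\!\sint\widehat S)_T$.

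Two small points are worth tightening. First, you only discuss admissibility at the terminal time; you also need $x+(\widehat\varphi^1\!\sint\widehat S)_t\ge 0$ for all $t$. This follows by the same self-financing/bid--ask comparison you already use at $T$: for every $t$ one has $0\le V^{liq}_t(\widehat\varphi)\le \widehat\varphi^0_t+\widehat\varphi^1_t\widehat S_t\le x+(\widehat\varphi^1\!\sint\widehat S)_t$. Second, invoking the frictionless duality of \cite{KS99} requires that the $\widehat S$-market satisfy its standing assumptions (existence of a deflator and finiteness of the primal value); the first is immediate from $\widehat Y^0$, and the second from $\widetilde u(x)\le v(y)+xy<\infty$. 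With these two remarks your argument is complete and matches the original proof in \cite{CS16duality}.
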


On the other hand, each shadow price process can be represented by the quotient of some ODPs. Here below is the result obtained in \cite{CS16duality}.

\begin{proposition}[{\cite[Proposition 3.8]{CS16duality}}] \label{spstru}
If a shadow price $\wh S$ exists, it is given by $\widehat{S}=\frac{\widehat{Y}^1(y)}{\widehat{Y}^0(y)}$ for an ODP $(\widehat{Y}^0(y),\widehat{Y}^1(y))\in \cB(y,\p)$ of the dual problem \eqref{geht}.
\end{proposition}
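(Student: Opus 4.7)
The plan is to invoke the classical frictionless Kramkov--Schachermayer duality for the market driven by the shadow price $\widehat S$, lift its dual optimizer canonically to an element of $\mathcal B(y,\massP)$, and recognize this element as an ODP via the first-order relation already recorded in Theorem~\ref{thhurt}(iii).

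Since $\widehat S$ is a shadow price, the frictionless utility maximization problem driven by $\widehat S$ has the same primal optimizer $\widehat\varphi = (\widehat\varphi^0,\widehat\varphi^1)$ as problem~\eqref{J5}. Applying the Kramkov--Schachermayer theorem to this frictionless problem yields a c\`adl\`ag positive supermartingale $\widetilde Y$, normalized so that $\widetilde Y_0=1$, with the property that $\widetilde Y X$ is a supermartingale for every nonnegative frictionless wealth process $X = 1 + \vartheta \cdot \widehat S$, together with the first-order identity $y\widetilde Y_T = U'(\widehat g(x))$ where $y = u'(x)$. I would then set
\[
\widehat Y^0 := y\widetilde Y, \qquad \widehat Y^1 := y\widetilde Y \widehat S,
\]
so that $\widehat Y^0_0 = y$, $\widehat Y^1 = \widehat Y^0 \widehat S$ with $\widehat S \in [(1-\lambda)S,S]$, and the announced identity $\widehat S = \widehat Y^1/\widehat Y^0$ is tautological.

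The core step is to verify that $(\widehat Y^0,\widehat Y^1)\in \mathcal B(y,\massP)$, i.e., that $y\widetilde Y(\varphi^0 + \varphi^1\widehat S)$ is a nonnegative optional strong supermartingale for every $(\varphi^0,\varphi^1)\in \mathcal A(1)$. Nonnegativity is immediate from $\widehat S\in[(1-\lambda)S,S]$ and admissibility, since $\varphi^0_t + \varphi^1_t\widehat S_t \geq V^{liq}_t(\varphi)\geq 0$. For the supermartingale property, the self-financing constraint combined with $\widehat S\in[(1-\lambda)S,S]$ gives $d\varphi^0 + \widehat S\,d\varphi^1 \leq 0$; writing $X_t := 1 + \int_0^t \varphi^1_{u-}\,d\widehat S_u$ and $A_t := X_t - (\varphi^0_t + \varphi^1_t\widehat S_t)$, an integration-by-parts computation shows that $A$ is a nondecreasing l\`adl\`ag process with $A_0=0$ and that $X$ is a nonnegative frictionless admissible wealth, so $\widetilde Y X$ is a c\`adl\`ag supermartingale. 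Combining this with the nonnegativity of $\widetilde Y$ and $A$ via the Mertens decomposition yields the desired optional strong supermartingale property for $\widetilde Y(X-A) = \widetilde Y(\varphi^0 + \varphi^1\widehat S)$.

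Optimality of $\widehat Y$ for the dual problem then follows by comparing the two first-order relations: $\widehat Y^0_T = y\widetilde Y_T = U'(\widehat g(x)) = \widehat h(y)$ by Theorem~\ref{thhurt}(iii). I expect the main obstacle to lie in the passage from the c\`adl\`ag supermartingale property of $\widetilde Y X$ in the frictionless setting to the \emph{optional strong} supermartingale property of $\widetilde Y(\varphi^0 + \varphi^1\widehat S)$, since $\varphi$ is only l\`adl\`ag and $\widehat S$ may jump; this is precisely where Mertens' decomposition and the l\`adl\`ag integration-by-parts formalism developed in \cite{CS16duality} become essential.
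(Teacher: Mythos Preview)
The present paper does not prove this proposition; it is quoted from \cite[Proposition~3.8]{CS16duality} and stated without proof. There is therefore no argument in the paper to compare against.

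That said, your sketch is precisely the route taken in the cited reference. One applies the frictionless Kramkov--Schachermayer duality to the shadow market $\widehat S$, obtains the supermartingale-deflator dual optimizer $\widetilde Y$, and lifts it to $(\widehat Y^0,\widehat Y^1):=(y\widetilde Y,\,y\widetilde Y\widehat S)$. The inclusion $(\widehat Y^0,\widehat Y^1)\in\mathcal B(y,\massP)$ rests on the domination $\varphi^0_t+\varphi^1_t\widehat S_t\le 1+(\varphi^1\!\cdot\widehat S)_t$ for every $(\varphi^0,\varphi^1)\in\mathcal A(1)$, which is exactly the content of \cite[Lemma~A.1]{CS16duality}. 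Your identification of the technical crux---passing from the c\`adl\`ag supermartingale $\widetilde Y X$ to the optional strong supermartingale $\widetilde Y(\varphi^0+\varphi^1\widehat S)$ when $\varphi$ is only l\`adl\`ag---is accurate, and this is what the Mertens machinery in \cite{CS16duality} handles.

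One point worth making explicit: to conclude $\widehat Y^0_T=\widehat h(y)$ you implicitly use that the frictionless and frictional dual inputs agree, i.e., $u'_{\widehat S}(x)=u'(x)=y$. This holds because $u_{\widehat S}\ge u$ everywhere (any $\lambda$-admissible strategy is dominated by a frictionless $\widehat S$-strategy since $\widehat S\in[(1-\lambda)S,S]$) with equality at $x$ by the shadow-price property; concavity and differentiability of both value functions then force equality of the derivatives at $x$.
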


\section{Static stability}

For a fixed physical probability measure $\massP$ and a fixed utility function $U$, for any initial endowment $x$, by Theorem \ref{thhurt}, 
the primal problem \eqref{J5} can be solved by some $\wh\varphi^0_T(x; U, \p)\in \mathcal{C}(x; \massP)$ and the optimal value is denoted by $u(x; U, \p)$. On the other hand, for the dual input $y>0$,  the dual problem \eqref{geht} parameterized by $(V, \massP)$ admits a solution $\widehat Y_{T}^0(y; V, \p)\in \mathcal{D}(y; \p)$ and the optimal value is denoted by $v(y; V, \p)$.
In this section, we study how the optimality of the primal and dual problems is affected by
\begin{itemize}
\item perturbations of the initial endowment and of the dual input; 
\item {variations} of the investor's utility; 
\item  misspecification of the underlying market model.
\end{itemize}

Assume that the variations of the investor's utility is represented by a sequence of perturbed utility functions $(U_n)_{n\in\mathbb{N}}$ and that of the underlying market model is described by a sequence of probabilities $(\p_n)_{n\in \mathbb{N}}$. 
Now we introduce the strict mathematical formulation of this problem, in which we make assumptions in accordance with the ones in \cite{KZ11} for considering the frictionless case with random endowments. 

\begin{assumption}\label{pUnconv}
For each $n \in \NN,\ \p_n \sim \p $ and $\lims{n}{\massP} = \massP$ in total variation,
$\lim_{n \to \i}U_n=U$ pointwise and $x_n\rightarrow x>0,\ y_n\rightarrow y>0$.
\end{assumption}

\begin{remark}\label{rem32}
\begin{enumerate}[(1)]
 \item  For each $n$, the Radon-Nikodym derivatives $\frac{d\massP_n}{d\massP}$ exists. Moreover, denote
          $$\wt Z_{t}^{n}:=\Excond{\p}{\frac{d\p_n}{d\p}}{\cF_{t}}, $$
          which is a $\p$-martingale.  
        It is obvious that $\mathcal{Z}^{loc, \lambda}(\massP)\neq \emptyset$ implies $\mathcal{Z}^{loc, \lambda}(\massP_n)\neq \emptyset$, for each $n\in \mathbb{N}$. Precisely, 
          {for any $Z\in \mathcal{Z}^{loc, \lambda}(\massP)$, $\widetilde{Z}^{-1}Z\in \mathcal{Z}^{loc, \lambda}(\massP_n)$ and for any $Z'\in  \mathcal{Z}^{loc, \lambda}(\massP_n)$, $\widetilde{Z}Z'\in \mathcal{Z}^{loc, \lambda}(\massP)$.}
        Moreover, {we define $\widetilde{Z}_n:=\widetilde{Z}^n_T=\frac{d\massP_n}{d\massP}$.}
 \item That $\lim_{n \to \infty}U_n=U$ pointwise implies the pointwise convergence of the sequence of their Legendre-Fenchel transforms, i.e.,  $\lim_{n \to \infty}V_n=V$  pointwisely.  
       Indeed, $\{U_n\}_{n\in\NN}$ is a family of concave functions on finite-dimensional space. 
       Then, pointwise convergence is equivalent to epi-convergence on the interior of the domain of the limiting function and moreover, $U_n$ epi-converges to $U$ is equivalent to the conjugate sequence $V_n$ converges to $V$ (for more general analysis, see \cite{SW77, RW98}). 
 \item Due to the convexity, the sequences $\seqnet{U}{n}{\NN}, \seqnet{V}{n}{\NN}$ as well as their derivatives $\{U'_n\}_{n\in\NN}$, $\{V'_n\}_{n\in\NN}$ converge {\it{uniformly}} on compact subsets of $(0, \infty)$ to their respective limits $U$,  $V$, $U'$ and $V'$ (see  e.g.~\cite[Theorem 10.8 and 25.7]{Roc70} for a general statement).
\end{enumerate}
\end{remark}

\vspace{4mm}

In the frictionless case, Larsen mentioned in \cite[Section 2.6]{Lar09} that the pointwise convergence of $\{U_n\}_{n\in \mathbb{N}}$ is not sufficient to prove the upper semi-continuity of the value function $v$ 
  and more structure conditions on the converging sequence of $\{U_n\}_{n\in \mathbb{N}}$ should be imposed. 
In the present paper, we introduce the following assumption, whose analogue in the frictionless case could be found in \cite[Section 2.3.2]{KZ11}.

\begin{assumption}\label{UIZ}
 Define $$ \cZ_T:= \big\{ Z_T^0: (Z^0,Z^1) \in \cZ^{loc,\lambda} \big\}.$$
 There exists $Z_T^0 \in \cZ_T$, such that for all $y > 0$ the family $\left\{ \wt{Z}_n V_n^+\left(y \frac{Z_T^0}{ \wt{Z}_n}\right)   \right\}_{n \in \NN} $
 is $\p$-uniformly integrable.
\end{assumption}

\begin{theorem} \label{tconv}
 Under the assumptions for Theorem \ref{thhurt} and under Assumptions \ref{pUnconv}, \ref{UIZ}, 
  we have the following limiting relationship for value functions and  optimal solutions:
\begin{align}
  \lim_{n\to \i}  u(x_n; U_n, \p_n)=  u(x; U, \p), &\quad  \lim_{n\to \i}  v(y_n; V_n, \p_n)=  v(y; V, \p);\label{stabvl}\\
  \lim_{n\to \i} \frac{\partial}{\partial x}u(x_n; U_n, \p_n)=   \frac{\partial}{\partial x}u(x; U, \p), &\quad  \lim_{n\to \i} \frac{\partial}{\partial y}v(y_n; V_n, \p_n)=   \frac{\partial}{\partial y}v(y; V, \p);\label{stabde}\\
  \lim_{n\to\i} \wh\varphi^{n, 0}_T(x_n; U_n, \p_n)= \wh\varphi^0_T(x; U, \p),&\quad \lim_{n\to \i}\widehat h(y_n; V_n, \p_n)= \widehat h(y; V, \p),\ \mbox{a.s.}\label{stabop}
\end{align}
\end{theorem}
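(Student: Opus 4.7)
\emph{Reduction to the reference measure.} The plan begins by absorbing the measure change into the integrand via the correspondence $\mathcal{Z}^{loc,\lambda}(\p)\ni(Z^0,Z^1)\mapsto(Z^0/\widetilde Z^n,Z^1/\widetilde Z^n)\in\mathcal{Z}^{loc,\lambda}(\p_n)$ from Remark~\ref{rem32}(1), together with the analogous rescaling for $\mathcal{B}$ checked by Bayes. Combined with \eqref{vequiv} this gives
\begin{equation*}
v(y_n;V_n,\p_n)=\inf_{(Z^0,Z^1)\in\mathcal{Z}^{loc,\lambda}(\p)}\massE^{\p}\!\left[\widetilde Z_n\,V_n\!\left(y_n\frac{Z^0_T}{\widetilde Z_n}\right)\right],
\end{equation*}
and, at the attained minimizer, $v(y_n;V_n,\p_n)=\massE^{\p}[\widetilde Z_n V_n(\widehat h_n)]$ with $\widehat h_n\widetilde Z_n\in\mathcal{D}(y_n;\p)$, for $\widehat h_n:=\widehat h(y_n;V_n,\p_n)$. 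All three perturbations are thus encoded in a single functional on the fixed set $\mathcal{Z}^{loc,\lambda}(\p)$.

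\emph{Upper bound.} For any fixed $(Z^0,Z^1)\in\mathcal{Z}^{loc,\lambda}(\p)$, the first step is to establish $\widetilde Z_n V_n(y_nZ^0_T/\widetilde Z_n)\to V(yZ^0_T)$ in $L^1(\p)$. Total-variation convergence yields $\widetilde Z_n\to 1$ in $L^1(\p)$, hence a.s.\ along a subsequence; combined with the locally uniform convergence $V_n\to V$ on $(0,\infty)$ from Remark~\ref{rem32}(3) and $y_n\to y$, this gives a.s.\ convergence of the integrand. The positive part is uniformly dominated by Assumption~\ref{UIZ} and the negative part is controlled by the boundedness of $V^-$ on $[\varepsilon,\infty)$, so Vitali's theorem applies. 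Taking the infimum over $(Z^0,Z^1)$ then gives $\limsup_n v(y_n;V_n,\p_n)\le v(y;V,\p)$.

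\emph{Lower bound via Komlós.} The reverse direction rests on compactness in the dual domain. Setting $\widetilde h_n:=\widehat h_n\widetilde Z_n\in\mathcal{D}(y_n;\p)$, the solidity, convexity and $L^0$-closedness of $\mathcal{D}(y+1;\p)$ allow a Komlós extraction producing convex combinations $\bar h_n\in\conv\{\widetilde h_k:k\ge n\}$ converging $\p$-a.s.\ to some $\widehat h_\infty\in\mathcal{D}(y;\p)$; after a further subsequence one also has $\widetilde Z_n\to 1$ $\p$-a.s., so $\bar h_n/\widetilde Z_n\to\widehat h_\infty$ $\p$-a.s. Then Fatou's lemma, together with the convexity of $V_n$, the locally uniform convergence $V_n\to V$, and the uniform integrability of the positive parts from Assumption~\ref{UIZ}, yields
\begin{equation*}
v(y;V,\p)\le\massE^{\p}[V(\widehat h_\infty)]\le\liminf_n\massE^{\p}[\widetilde Z_n V_n(\widehat h_n)]=\liminf_n v(y_n;V_n,\p_n).
\end{equation*}
Combined with the upper bound this establishes \eqref{stabvl} for $v$. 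Strict convexity of $V$ forces $\widehat h_\infty=\widehat h(y;V,\p)$, and running the same extraction along arbitrary subsequences gives the $\p$-a.s.\ convergence of $\widehat h_n$ asserted in \eqref{stabop}.

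\emph{From $v$ to $u$, derivatives, primal optimizers, and the main obstacle.} Pointwise convergence of the finite-valued convex functions $v_n:=v(\cdot;V_n,\p_n)$ on $(0,\infty)$ is automatically locally uniform, so inverting the Legendre transform via $u_n(x)=\inf_{y>0}(v_n(y)+xy)$ delivers the primal half of \eqref{stabvl}; \cite[Theorem~25.7]{Roc70} then yields \eqref{stabde}. The primal-optimizer convergence in \eqref{stabop} follows from the first-order relation $\widehat g_n=-V_n'(\widehat h_n)$ of \eqref{Y=U'(X)}, the locally uniform convergence $V_n'\to V'$, and the already-established convergence of $\widehat h_n$. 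The main technical obstacle is the Komlós step above: $\mathcal{B}(\cdot;\p)$ consists of optional strong supermartingale deflators rather than c\`adl\`ag martingales, so the closedness needed for the limit is the Mertens-type version of \eqref{eq039} due to \cite{CS16strong}, and because the functional couples $n$ through both $V_n$ and $\widetilde Z_n$, a careful separation of positive and negative parts via Assumption~\ref{UIZ} is required before applying Fatou.
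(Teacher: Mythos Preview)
Your overall architecture matches the paper's (and both follow \cite{KZ11}): absorb the measure change into the integrand, prove two one-sided bounds for $v$, pass to $u$ by conjugacy, deduce derivative convergence by convexity, and finally recover the optimizers. However, two of your steps have genuine gaps.

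\textbf{The UI hypothesis is for a single $Z^0_T$, not all of them.} Assumption~\ref{UIZ} asserts uniform integrability of $\{\widetilde Z_n V_n^+(yZ^0_T/\widetilde Z_n)\}_n$ only for one fixed $Z^0_T\in\cZ_T$. In your upper-bound paragraph you write ``For any fixed $(Z^0,Z^1)\in\cZ^{loc,\lambda}(\p)$ \dots\ the positive part is uniformly dominated by Assumption~\ref{UIZ}'', and then take the infimum over all such $(Z^0,Z^1)$; but the assumption gives no control of $V_n^+(y_nZ^0_T/\widetilde Z_n)$ for a generic $Z^0$. The paper closes exactly this gap with Lemmas~\ref{vexp} and the subsequent lemma: first one shows that the infimum defining $v(y;V,\p)$ is already attained over the restricted class $\wtcD(y,Z^0_T)=\{h\in\cD(y,\p):Z^0_T/h\in L^\infty\}$, and then, since every $h\in\wtcD(y,Z^0_T)$ dominates a constant multiple of the distinguished $Z^0_T$ and $V_n$ is decreasing, the UI for the positive part propagates from $Z^0_T$ to each such $h$. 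Without this reduction your upper bound does not go through. The same misuse appears in your lower-bound paragraph, where you invoke ``uniform integrability of the positive parts from Assumption~\ref{UIZ}'' for the optimizers $\widehat h_n$; nothing in the assumption controls $V_n^+(\widehat h_n)$.

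\textbf{Koml\'os does not give convergence of the original optimizers.} Your lower-bound step produces convex combinations $\bar h_n\in\conv\{\widetilde h_k:k\ge n\}$ converging a.s.; strict convexity identifies the limit, but this yields $\bar h_n\to\widehat h(y;V,\p)$, not $\widehat h_n\to\widehat h(y;V,\p)$. ``Running the same extraction along arbitrary subsequences'' still only returns convergence of forward convex combinations along each subsequence, which does not force the original sequence to converge (think of an alternating sequence). The paper therefore treats the optimizer convergence separately: it introduces the auxiliary sequence $f_n:=n^{-1}yZ^0_T+(1-n^{-1})\widehat h(y;V,\p)\in\wtcD(y,Z^0_T)$ and, following \cite[Section~3.5.2]{KZ11} and an elaboration of \cite[Lemma~A1.1]{DS94}, shows directly that $\widehat h(y_n;V_n,\p_n)$ is close in probability to $f_n/\widetilde Z_n$ along a subsequence. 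The primal optimizer convergence then follows from \eqref{Y=U'(X)} as you indicate.
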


To prove this theorem, we first need to reformulate the dual problem $v_n$ with the help of the following proposition. 
 \begin{proposition} \label{relaynpn}
If $\widetilde h \in  \cD(y, \p)$, then for each  $n \in \NN$, $\frac{\widetilde h}{\wt Z_{n}}\in  \cD(y, \p_n)$.
       Conversely, if  $h \in  \cD(y, \p_n)$, then $h\wt Z_{n} \in \cD(y, \p).$
        \end{proposition}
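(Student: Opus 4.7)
The strategy is to show that multiplication by the density process $\wt Z^n$ (and division by it) furnishes a bijective correspondence between $\cB(y,\p)$ and $\cB(y,\p_n)$, which immediately yields both directions of the claim. The proof rests on a Bayes-type transfer principle for optional strong supermartingales, which I would establish first, then verify that the three defining properties of membership in $\cB(y,\cdot)$ transfer intact.

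The key lemma I would prove is: for a nonnegative optional process $X=(X_t)_{0\le t\le T}$, $X$ is a $\p_n$-optional strong supermartingale if and only if $X\wt Z^n$ is a $\p$-optional strong supermartingale. Since $\p_n\sim\p$, the density process $\wt Z^n$ is a strictly positive uniformly integrable $\p$-martingale closed by $\wt Z_n\in L^1(\p)$. The classical Bayes formula at stopping times then gives, for any $[0,T]$-valued stopping times $\sigma\le\tau$,
\[
\Excond{\p_n}{X_\tau}{\cF_\sigma}=\frac{1}{\wt Z^n_\sigma}\,\Excond{\p}{X_\tau\wt Z^n_\tau}{\cF_\sigma},
\]
so the inequality $X_\sigma\ge\Excond{\p_n}{X_\tau}{\cF_\sigma}$ is equivalent to $X_\sigma\wt Z^n_\sigma\ge\Excond{\p}{X_\tau\wt Z^n_\tau}{\cF_\sigma}$. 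Integrability at each stopping time transfers by the same identity, and optionality is preserved since $\wt Z^n$ is optional. This is exactly Mertens' definition of an optional strong supermartingale, in either measure.

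For the forward direction, I would pick $(\wh Y^0,\wh Y^1)\in\cB(y,\p)$ with $\wh Y^0_T=\widetilde h$ and set $Y^{0,n}_t:=\wh Y^0_t/\wt Z^n_t$, $Y^{1,n}_t:=\wh Y^1_t/\wt Z^n_t$. Then $Y^{0,n}_0=y$ and $Y^{1,n}=Y^{0,n}\wt S$ for the same $[(1-\lambda)S,S]$-valued process $\wt S$ associated with $\wh Y$. For every $(\varphi^0,\varphi^1)\in\cA(1)$,
\[
Y^{0,n}_t\varphi^0_t+Y^{1,n}_t\varphi^1_t=\frac{\wh Y^0_t\varphi^0_t+\wh Y^1_t\varphi^1_t}{\wt Z^n_t},
\]
whose numerator is a nonnegative $\p$-optional strong supermartingale by hypothesis; by the lemma (applied with $X=(\wh Y^0\varphi^0+\wh Y^1\varphi^1)/\wt Z^n$), the quotient is a nonnegative $\p_n$-optional strong supermartingale. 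Hence $(Y^{0,n},Y^{1,n})\in\cB(y,\p_n)$ and $\widetilde h/\wt Z_n\in\cD(y,\p_n)$.

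The converse follows by symmetry, since $d\p/d\p_n=1/\wt Z_n$ has $\p_n$-density process $(1/\wt Z^n_t)_{0\le t\le T}$: starting from $(W^0,W^1)\in\cB(y,\p_n)$ with $W^0_T=h$, the same argument with $(\p,\p_n)$ interchanged and multiplication by $\wt Z^n$ in place of division shows $(W^0\wt Z^n,W^1\wt Z^n)\in\cB(y,\p)$ and hence $h\wt Z_n\in\cD(y,\p)$. The only delicate point is the Bayes formula at arbitrary stopping times, not merely deterministic ones, since optional strong supermartingales in the sense of Mertens need not be right-continuous; but as the defining property is phrased in stopping-time form, and $\wt Z^n$ is closed on the right by $\wt Z_n\in L^1(\p)$, the standard optional sampling version of Bayes suffices.
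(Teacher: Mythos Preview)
Your proposal is correct and follows essentially the same approach as the paper: both arguments hinge on the Bayes formula at stopping times to transfer the optional strong supermartingale property of $\varphi^0 Y^0+\varphi^1 Y^1$ between $\p$ and $\p_n$ via the density process $\wt Z^n$. The only cosmetic difference is that you isolate the transfer principle as a separate lemma and spell out the forward direction first, whereas the paper writes out the converse and declares the forward direction analogous.
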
 
\begin{proof}
  Without loss of generality, let $h \in  \cD(1, \p_n) $.  
  By definition of $\cD(1, \p_n)$, there is a nonnegative $\p_n$-optional strong supermartingale $ (Y^0,Y^1)$ with $Y_0^0= 1,\ Y_T^0= h$, 
    satisfying that $\frac{Y^1}{Y^0} \in [(1-\la)S, S], \mbox{ and } \varphi^0Y^0+ \varphi^1Y^1$ is a nonnegative $\p_n$-optional strong supermartingale for all $\varphi \in \cA(1)$. 
  Define 
   $$\wt Z_{t}^{n}:=\Excond{\p}{\frac{d\p_n}{d\p}}{\cF_{t}},$$
   which is a $\p$-martingale. 
  Clearly, $\wt Z_n = \wt Z_{T}^{n}$.

  It suffices to show the supermartingale property of the process $ \varphi^0Y^0\wt Z^{n}+ \varphi^1Y^1 \wt Z^{n}$ under $\p$, for all $\varphi \in \cA(1)$.
  For $0  \leq s< t \leq T$   and  $\varphi=(\varphi^0_t, \varphi^1_t)_{\tT} \in \cA(1)$, by Bayes' formula, we have immediately 
    $$( \varphi^0_s Y^0_s +  \varphi^1_s Y^1_s )\wt Z_{s}^{n} \geq  \Excond{\p}{\left(\varphi^0_t Y^0_t +  \varphi^1_t Y^1_t\right)\wt Z_{t}^{n}}{\cF_{s}}.$$ 
 
  Analogously we may show the first assertion. 
\end{proof}

\begin{proof}[Sketch of proof of Theorem \ref{tconv}] The proof, which is very similar as in  \cite{KZ11} by setting the random endowments $q=0$,  will be basically in four steps. In order to avoid redundancy, we just sketch the proof and list what will be needed in the context with transaction costs. We therefore refer the readers to \cite[Section 3]{KZ11} for more details. 

\begin{enumerate}[(1)]
 \item 
By Proposition  {\ref{relaynpn}}, the dual problem $v(\cdot; V_n, \p_n)$ corresponding to the utility maximization problem parameterized by $(U_n, \p_n)$ can be  reformulated with the parameters  $(U_n, \p)$. Precisely,
  \begin{align}\notag
    v(y; V_n, \p_n) &= \inf_{h \in \cD(y,\p_n)} \Ex{\p_n}{V_n(h)}=\Ex{\p_n}{V_n\left(\wh{h}(y,\p_n)\right)}\notag\\
                      &= \inf_{\overline{h} \in \cD(y,\p)} \Ex{\p}{ \widetilde{Z}_nV_n\left(\frac{\overline h}{\widetilde{Z}_n }\right)}=  \Ex{\p}{ \widetilde{Z}_nV_n\left(\frac{\wh h_n(y,\p)}{\widetilde{Z}_n }\right)}, \notag
    \end{align}
   where $\wt{Z}_n$ converges to 1 in $\lo$ and thus in $\lzs(\p)$; the dual optimizer 
    $\wh{h}(y; \p_n) \in \cD(y, \p_n)$ and $\wh{h}_n(y; \p) \in \cD(y, \p)$ attain the infimum $v(y; V_n, \p_n)$.
  The equality above corresponds to (3.1) in \cite{KZ11}. 
  We can subsequently proceed exactly the same as \cite[Section 3.2]{KZ11} to have the semi-continuity
    $$v(y; V, \p)\leq \liminf_{n\rightarrow \infty}v(y; V_n, \p_n),\quad {\rm for}\ y>0, $$
    the proof of which depends on the properties of the functions $\{V_n\}_{n\in \mathbb{N}}$ and $V$ {as well as} Assumption \ref{pUnconv} and Remark \ref{rem32}.

 \item The dual value function has an upper-semicontinuity property. Indeed,
  for any  $Z_T^0 \in  \cZ_T, y > 0$, we define 
       $$\wtcD(y, Z_T^0):= \left\{ h \in \cD(y,\p):  \frac{Z_T}{h} \in L^{\infty} \right\},$$
       then the counterparts of Lemma 3.3 and 3.4 from \cite[Section 3.3]{KZ11} in the context with transaction costs are listed as follows:      
        
        \begin{lemma} [compare {\cite[Lemma 3.3]{KZ11}}] \label{vexp}
         Fix $ y >0 $, let   $Z_T^0 \in  \cZ_T$ be such that $V^+(yZ_T^0)\in \lo$,
         Then $$v(y; V, \p)= \inf_{\wt h \in \wtcD(y, Z_T^0)}\Ex{\p}{V\big(\wt h\big)}.$$
        \end{lemma}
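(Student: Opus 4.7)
The plan is to prove the claim by a standard convex-combination perturbation that forces any candidate dual variable into $\wtcD(y, Z_T^0)$. Since $\wtcD(y, Z_T^0) \subseteq \cD(y, \p)$ by the very definition of $\wtcD$, the inequality $v(y; V, \p) \leq \inf_{\wt h \in \wtcD(y, Z_T^0)} \Es{V(\wt h)}$ comes for free; the content of the lemma lies in the reverse inequality.

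The first preparatory step is to observe that $y Z_T^0 \in \cD(y, \p)$. Indeed, if $Z = (Z^0, Z^1) \in \cZ^{loc,\la}$, then $(y Z^0, y Z^1)$ is a pair of nonnegative c\`adl\`ag local martingales whose ratio lies in $[(1-\la)S, S]$; the consistent-price-system property ensures that $y Z^0 \varphi^0 + y Z^1 \varphi^1$ is a nonnegative optional strong supermartingale for every $\varphi \in \cA(1)$, so that $(yZ^0, yZ^1) \in \cB(y, \p)$ and hence $y Z_T^0 \in \cD(y, \p)$. Combined with the hypothesis $V^+(y Z_T^0) \in \lo$ and the lower bound $\Es{V(yZ_T^0)} \geq v(y; V, \p) > -\infty$ supplied by Theorem \ref{thhurt}(i), this gives $\Es{V(y Z_T^0)} \in \RR$, which rules out the troublesome $-\infty$ case.

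Next, I would fix any $h \in \cD(y, \p)$ with $\Es{V(h)} < \infty$ and introduce the perturbation $h^{\epsilon} := (1-\epsilon) h + \epsilon y Z_T^0$ for $\epsilon \in (0,1)$. Convexity of $\cD(y, \p)$---which follows from the convexity of $\cB(y, \p)$ (the bid-ask constraint $\wt S \in [(1-\la)S, S]$ is preserved under the induced convex mixtures of the associated $\wt S$'s, and the optional strong supermartingale property is stable under convex combinations)---yields $h^\epsilon \in \cD(y, \p)$. Moreover $h^\epsilon \geq \epsilon y Z_T^0 > 0$ $\p$-a.s., so $Z_T^0 / h^\epsilon \leq 1/(\epsilon y) \in \llinfty$, and therefore $h^\epsilon \in \wtcD(y, Z_T^0)$.

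The final step is to let $\epsilon \to 0^+$. By convexity of $V$, pointwise $V(h^\epsilon) \leq (1-\epsilon)\, V(h) + \epsilon\, V(y Z_T^0)$; since both $\Es{V(h)}$ and $\Es{V(yZ_T^0)}$ are finite, taking expectations and letting $\epsilon \to 0^+$ yields
$$\limsup_{\epsilon \to 0^+} \Es{V(h^\epsilon)} \leq \Es{V(h)}.$$
As each $h^\epsilon \in \wtcD(y, Z_T^0)$, this gives $\inf_{\wt h \in \wtcD(y, Z_T^0)} \Es{V(\wt h)} \leq \Es{V(h)}$, and taking the infimum over $h \in \cD(y, \p)$ completes the proof. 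The whole argument is the natural transcription of \cite[Lemma 3.3]{KZ11} to the present transaction-cost setting; the only genuine check specific to frictions is the confirmation that $\cZ^{loc, \la}$-processes sit inside $\cB(y, \p)$ and that convex combinations respect the bid-ask spread, which are both routine from the definitions.
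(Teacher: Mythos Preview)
Your argument is correct and is precisely the standard convex--perturbation approach that the paper invokes by referring to \cite[Lemma 3.3]{KZ11}; the paper does not spell out an independent proof but defers to that reference, and your write-up is the faithful transaction-cost transcription of it. The only places where the frictional setting enters---that $yZ_T^0 \in \cD(y,\p)$ via $(yZ^0, yZ^1)\in\cB(y,\p)$, and that $\cD(y,\p)$ is convex because convex combinations preserve the bid-ask constraint on $\widetilde S$---are exactly the checks you flag, so nothing is missing.
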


       \begin{lemma}[compare {\cite[Lemma 3.4]{KZ11}}]
       Suppose for some $f \in \lzs_+,$ the collection $\{\widetilde{Z}_n V^+_n(f/\widetilde{Z}_n)\}_{n\in \NN}$ is $\p$-uniformly integrable.
       Let $h \in \lo$ be such that $h \geq f$ a.s. Then
       $$\lim_{n \to \i}\widetilde{Z}_n V_n\big(h/\widetilde{Z}_n\big) =V(h) \mbox{ in\ }\lo.$$
       \end{lemma}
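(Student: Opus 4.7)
The strategy is to combine almost sure convergence with uniform integrability of the positive and negative parts of $\wt Z_n V_n(h/\wt Z_n)$, and then invoke Vitali's convergence theorem on each part. Since $L^1$-convergence is metrizable, it is enough to show that every subsequence of $\{\wt Z_n V_n(h/\wt Z_n)\}_{n\in\NN}$ admits a further subsequence converging to $V(h)$ in $L^1(\p)$. Extracting from the $L^1(\p)$-convergence $\wt Z_n \to 1$ (which follows from $\p_n \to \p$ in total variation), we may, after relabeling, assume $\wt Z_n \to 1$ almost surely. On the full-measure event $\{0 < h < \infty\}$ we then have $h/\wt Z_n \to h$ a.s., and Remark~\ref{rem32}(3) (uniform convergence of $V_n$ to $V$ on compact subsets of $(0,\infty)$) yields $V_n(h/\wt Z_n) \to V(h)$ and hence $\wt Z_n V_n(h/\wt Z_n) \to V(h)$ almost surely.

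For the positive part, since each $V_n$ is decreasing and $h \geq f$, one has $V_n^+(h/\wt Z_n) \leq V_n^+(f/\wt Z_n)$, so the standing hypothesis directly provides $\p$-uniform integrability of $\{\wt Z_n V_n^+(h/\wt Z_n)\}_{n\in\NN}$.

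The negative part is the main obstacle, as the UI hypothesis says nothing about it. The remedy is a linear minorant supplied by the convexity of $V_n$: for any fixed $y_0 > 0$,
\begin{equation*}
V_n(y) \geq V_n(y_0) + V_n'(y_0)(y - y_0), \qquad y > 0,
\end{equation*}
and since $V_n'(y_0) < 0$ this gives $V_n^-(y) \leq |V_n(y_0)| + |V_n'(y_0)| \, y_0 + |V_n'(y_0)| \, y$. By Remark~\ref{rem32}(3), $V_n(y_0) \to V(y_0)$ and $V_n'(y_0) \to V'(y_0)$, so the coefficients on the right are bounded uniformly in $n$. Substituting $y = h/\wt Z_n$ and multiplying by $\wt Z_n$ yields
\begin{equation*}
0 \leq \wt Z_n V_n^-(h/\wt Z_n) \leq C\, \wt Z_n + C'\, h
\end{equation*}
for constants $C, C'$ independent of $n$. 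The family $\{\wt Z_n\}_{n\in\NN}$ is UI because $\wt Z_n \to 1$ in $L^1(\p)$, and $h \in L^1(\p)$, so the majorant is UI; therefore $\{\wt Z_n V_n^-(h/\wt Z_n)\}_{n\in\NN}$ is UI as well.

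Having established almost sure convergence of $\wt Z_n V_n^{\pm}(h/\wt Z_n)$ to $V^{\pm}(h)$ together with their $\p$-uniform integrability, Vitali's theorem gives $L^1(\p)$-convergence of each part, and combining them concludes the proof along the extracted subsequence, which in turn forces $L^1(\p)$-convergence of the original sequence.
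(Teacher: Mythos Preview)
Your argument is correct and follows the same route the paper defers to (namely \cite[Lemma~3.4]{KZ11}): pass to an a.s.\ convergent subsequence of $\wt Z_n$, use uniform convergence of $V_n$ and $V_n'$ on compacta for the pointwise limit, dominate the positive part via monotonicity of $V_n$ and the UI hypothesis, and control the negative part by the tangent-line minorant $V_n(y)\geq V_n(y_0)+V_n'(y_0)(y-y_0)$ to get a bound of the form $C\,\wt Z_n + C'\,h$; Vitali then yields $L^1$-convergence of each part. One small point worth making explicit: your ``full-measure event $\{0<h<\infty\}$'' tacitly uses $h>0$ a.s., which is not literally part of the hypothesis but is automatic in the intended application (there $h\in\wtcD(y,Z_T^0)$, so $Z_T^0/h\in L^\infty$ forces $h>0$); without it the conclusion can fail when $V(0)=+\infty$.
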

       With the help of these two lemmas, we can show as \cite[Section 3.3.2]{KZ11}  that 
       $$v(y; V, \p)\geq \limsup_{n\rightarrow \infty}v(y; V_n, \p_n), \quad {\rm for}\ y>0.$$

\item From the results in (1) and (2) and the fact that $v(\cdot; V_n, \p_n)$ and $v(\cdot; V_n, \p_n)$ are convex functions, we can conclude that
       $$ \lim_{n\rightarrow \infty }v(y_n; V_n, \p_n)=v(y; V, \p). $$
On the other hand, since $u(\cdot; U, \p)$ (resp. $u(\cdot; U_n, \p_n)$) is the Legendre-Fenchel transform of $v(\cdot; V, \p)$ (resp. $v(\cdot; V_n, \p_n)$). For the similar reason as Remark \ref{rem32} (2) and (3), we have 
     $$ \lim_{n\rightarrow\infty}u(x_n; U_n, \p_n)= u(x; U, \p). $$
Thus, (\ref{stabvl}) is proved. To see (\ref{stabde}), it suffices to apply the epi-convergence properties of $u(\cdot; U_n, \p_n)\rightarrow u(\cdot; U, \p)$ and of $v(\cdot; V_n, \p_n)\rightarrow v(\cdot; V, \p)$ and proceed the argument on the graphical convergence of the subdifferentials, which is the same as \cite[Section 3.4]{KZ11}.
 
\item Finally, to prove the $L^0$-convergence of the dual optimizer $\lim_{n\to \i}\widehat h(y_n; V_n, \p_n)= \widehat h(y; V, \p)$ in (\ref{stabop}), we can first define an auxiliary sequence $f_n:=n^{-1}yZ^{0}_T+(1-n^{-1})\widehat h(y; V, \p)\in \wt \cD(y, Z_T^0)$, for some $Z^0_T\in \mathcal{Z}_T$ such that $V^+(Z^0_T)\in L^1(\massP)$. Obviously, $f_n\rightarrow \widehat h(y; V, \p)$ in $L^0$. Then, one could prove by  following the procedures in \cite[Section 3.5.2]{KZ11} the claim that there exists a subsequence indexed by ${\{n_k\}}_{k\in \mathbb{N}}$, such that 
 $$
 \lim_{k\rightarrow \infty}\p_{n_k}\left\{ \widehat h(y_{n_k}; V_{n_k}, \p_{n_k})\in [k^{-1}, k], \frac{f_{n_k}}{\widetilde{Z}^{n_k}_T}\in [k^{-1}, k],  \left|\widehat h(y_{n_k}; V_{n_k}, \p_{n_k})- \frac{f_{n_k}}{\widetilde{Z}^{n_k}_T}\right|>k^{-1}\right\}=0,
 $$
where a more elaborate version of the method used in the proof of \cite[Lemma A1.1]{DS94} plays the key role. The other part of (\ref{stabop}) can be obtained in view of (\ref{Y=U'(X)}), (\ref{stabde}) and Remark \ref{rem32}.
\end{enumerate}
\end{proof}

\section{Dynamic stability} 
In the previous section, the stability result is only on the terminal values of the wealth processes as well as the terminal values of the optional supermartingale deflators attaining the dual optimality.  In the present section, we focus on the dynamics of the whole process which attains the optimizer.
 In particular, we will first investigate properties of the optimal dual processes and discuss the so-called dynamic stability. Moreover, we construct a shadow price process for the limiting problem parameterized by $(x; U, \p)$ by the sequence of shadow prices corresponding to the problems parameterized by $\{(x_n; U_n, \p_n)\}_{n\in \mathbb{N}}$.\\

Throughout this section, we need the following assumption, which is necessary for the existence of shadow price processes (see {\cite{CSY17}}).

\begin{assumption}\label{Scont}
Additionally, we assume that the stock price is driven by a continuous process.
\end{assumption}

\subsection{Properties of optional strong supermartingale deflators}
In this subsection, we first study the properties of optional strong supermartingale deflators, which are necessary for the argument on the dynamic stability. For the simplicity of notation, we consider only $Y:=(Y^0, Y^1)\in \mathcal B(1)$. Since the processes $Y^0$ and $Y^1$ are optional strong supermartingales, according to ${\cite{Mer72}}$, they admit the {Doob-Meyer-Mertens} decomposition, which is an analogue of the Doob-Meyer type decomposition for c\`adl\`ag supermartingales, i.e.,
\begin{align}\label{Mdecom}
 Y^i = M^i -A^i, \quad i=0, 1,\end{align}
where $M^i$ is a c\`adl\`ag local martingale and $A^i$ is a nondecreasing l\`adl\`ag predictable process. We now introduce the following proposition, which shows the relation between the nondecreasing parts $A^0$ and $A^1$, that is, symbolically,
   \begin{equation}\notag
      (1-\lambda)S_tdA_t^0 \leq dA_t^1 \leq S_tdA_t^0.
    \end{equation}
\begin{proposition} \label{propA}
  Let Assumption \ref{Scont} hold and  $Y:=(Y^0,Y^1)\in\cB(1)$.  The processes $Y^0$ and $Y^1$ admit the unique Doob-Meyer-Mertens decomposition as (\ref{Mdecom}).
  Let $\varepsilon+\lambda\leq 1$ and let $\sigma$ be a stopping time taking values in $[0, T]$. 
  Define 
     $$ \tau_{\varepsilon}:=\inf\left\{t\geq\sigma \,\Big|\,\frac{S_t}{S_\sigma}=(1+\varepsilon)\mbox{ or } (1-\varepsilon)\right\}\wedge T. $$
  Then, for all stopping time $\tau$ satisfying $\sigma\leq\tau\leq \tau_{\varepsilon}$, 
    \begin{equation}\label{a0a1}
     \begin{aligned}
         (1-\varepsilon)(1-\lambda)S_\sigma\massE\left[A^0_{\tau}-A_\sigma^0\big|\cF_\sigma\right] &\leq \massE\left[A^1_{\tau}-A^1_\sigma\big|\cF_\sigma\right]  \\
            &\leq (1+\varepsilon)S_\sigma\massE\left[A^0_{\tau}-A_\sigma^0\big|\cF_\sigma\right].
     \end{aligned}
    \end{equation}
\end{proposition}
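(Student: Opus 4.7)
The key idea is to test the supermartingale property of $\varphi^0 Y^0 + \varphi^1 Y^1$—which holds for every $\varphi \in \mathcal{A}(1)$ by the definition of $\mathcal{B}(1)$—against two static admissible strategies on the window $[\sigma, \tau_\varepsilon]$, one long and one short in the stock, with position sizes calibrated to saturate the admissibility constraint. Since $S_t$ remains in $[(1-\varepsilon)S_\sigma, (1+\varepsilon)S_\sigma]$ throughout that window, those extremal sizes are explicit functions of $\varepsilon$, $\lambda$, and $S_\sigma$, and plugging them in gives exactly the two bounds in \eqref{a0a1}. Throughout, the Mertens decomposition combined with a routine localization that makes $M^i$ a true martingale on $[\sigma,\tau]$ yields the identification
\[
B^i := \massE\bigl[A^i_\tau - A^i_\sigma \bigm| \cF_\sigma\bigr] = Y^i_\sigma - \massE\bigl[Y^i_\tau \bigm| \cF_\sigma\bigr] \geq 0, \qquad i = 0, 1.
\]

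For the upper bound, set $|b_1| := 1/[(\varepsilon + \lambda)S_\sigma]$ and define $\varphi^{(-)}$ to equal $(1,0)$ on $[0,\sigma)$, to switch at $\sigma$ to the short position $\bigl(1 + (1-\lambda)|b_1|S_\sigma,\,-|b_1|\bigr)$, to hold it unchanged on $[\sigma, \tau_\varepsilon]$, and to close the short at $\tau_\varepsilon$. Inspecting $V^{liq}_t(\varphi^{(-)})$ at the extremes $S_t = (1 \pm \varepsilon)S_\sigma$ shows admissibility on $[0,T]$, and $|b_1|$ is calibrated to make the liquidation value vanish at $t = \tau_\varepsilon$ in the adverse scenario $S_{\tau_\varepsilon} = (1+\varepsilon)S_\sigma$. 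Since no trading happens on $(\sigma, \tau)$, the supermartingale property of $\varphi^{(-),0}Y^0 + \varphi^{(-),1}Y^1$ collapses to
\[
\bigl(1 + (1-\lambda)|b_1|S_\sigma\bigr) B^0 \;\geq\; |b_1|\,B^1,
\]
and dividing by $|b_1|$ with $1/|b_1| = (\varepsilon + \lambda)S_\sigma$ gives $B^1 \leq (1+\varepsilon)S_\sigma B^0$.

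Symmetrically, for the lower bound take the long strategy $\varphi^{(+)}$ that holds $b_0 := 1/[(\varepsilon + \lambda - \lambda\varepsilon)S_\sigma]$ shares on $[\sigma,\tau_\varepsilon]$—the largest long position compatible with $V^{liq} \geq 0$ under the adverse move $S_t = (1-\varepsilon)S_\sigma$—and liquidates at $\tau_\varepsilon$. The analogous supermartingale inequality reads $(1 - b_0 S_\sigma)B^0 + b_0 B^1 \geq 0$, and since
\[
1 - b_0 S_\sigma \;=\; -\,\frac{(1-\varepsilon)(1-\lambda)}{\varepsilon + \lambda - \lambda\varepsilon} \;<\; 0,
\]
rearranging produces $B^1 \geq (1-\varepsilon)(1-\lambda)S_\sigma B^0$. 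The delicate part of the execution is the admissibility bookkeeping: the exact values of $b_0$ and $|b_1|$ are pinned down by requiring the residual cash position after the forced liquidation at $\tau_\varepsilon$ to be nonnegative in each of the endpoint scenarios $S_{\tau_\varepsilon} = (1 \pm \varepsilon)S_\sigma$; modulo these verifications, the rest is one line of algebra.
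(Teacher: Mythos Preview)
Your proof is correct and follows essentially the same approach as the paper: test the defining supermartingale property of $(Y^0,Y^1)\in\cB(1)$ against a static long (resp.\ short) buy-and-hold strategy on $[\sigma,\tau_\varepsilon]$, use the Mertens decomposition plus localization to cancel the martingale parts, and read off the two inequalities. The only difference is normalization: the paper takes exactly $1/S_\sigma$ shares in each direction (so its strategies start from initial cash $1-(1-\lambda)(1-\varepsilon)$ and $\lambda+\varepsilon$ rather than $1$), whereas you start from $(1,0)$ and push the position to the admissibility boundary, obtaining $b_0=1/[(\varepsilon+\lambda-\lambda\varepsilon)S_\sigma]$ and $|b_1|=1/[(\varepsilon+\lambda)S_\sigma]$; these are just scalar multiples of each other and lead to the identical inequalities after dividing through.
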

 
\begin{remark}
 A similar lemma can be found in \cite[Lemma 3.5]{CSY17}, which gives the property (\ref{a0a1}) for a particular supermartingale deflator constructed as the Fatou limiting process of a sequence of local consistent price systems. 
 We remark that our simple proof of the above proposition will not depend on the construction of $Y$, but only on the supermartingale property of the process $\varphi^0Y^0+\varphi^1Y^1$, for all $\varphi\in \mathcal{A}(1)$.
\end{remark}

\begin{proof}
 Consider the following  trading strategy $\varphi=(\varphi^0_t,\varphi^1_t)_{0\leq t\leq T} \in \cA(1)$ defined by 
    \begin{equation*}
       (\varphi^0_t,\varphi^1_t) := \left\{\begin{array}{cc} 
                                             \left(1-(1-\lambda)(1-\varepsilon), 0\right),  &  0\leq t <\sigma;  \\
                                             \left(-(1-\lambda)(1-\varepsilon),\frac{1}{S_\sigma}\right),  &  \sigma\leq t< \tau_\varepsilon;  \\
                                             \left(V^{liq}_{\tau_\varepsilon}(\varphi), 0 \right), & \tau_\varepsilon \leq t\leq T. 
                                           \end{array}
         \right.
    \end{equation*}
 Obviously, this trading strategy defined as above is $\lambda$-self-financing and admissible. Indeed, we only need to consider the worst case, i.e., the stock price attains $(1-\varepsilon)S_\sigma$,
     then we have for $\sigma < t \leq T$, the liquidation value
     $$ V^{liq}_t(\varphi) \geq -(1-\lambda)(1-\varepsilon) + (1-\lambda)\frac{1}{S_\sigma}(1-\varepsilon)S_\sigma = 0.  $$
 By the definition of the optional strong supermartingale deflator, we obtain that for each $\tau\in \llbracket \sigma, \tau_{\varepsilon}\rrbracket$,
    \begin{equation}\notag
     \begin{aligned}
       & -(1-\lambda)(1-\varepsilon)(M^0_\sigma-A^0_\sigma) + \frac{1}{S_\sigma}(M^1_\sigma-A^1_\sigma) \\
       & \quad \geq \massE\left[-(1-\lambda)(1-\varepsilon)\left(M^0_{\tau}-A^0_{\tau}\right)+\frac{1}{S_\sigma}\left(M^1_{\tau}-A^1_{\tau}\right)\Big|\cF_\sigma\right]. 
     \end{aligned}
    \end{equation}
 We may assume without loss of generality that $M^0$ and $M^1$ are true martingales, otherwise, stopping technique applies.
 Therefore, we obtain
    \begin{equation}\notag
      \massE\left[A^1_{\tau}-A^1_\sigma|\cF_\sigma\right] \geq (1-\varepsilon)(1-\lambda)S_\sigma\massE\left[A^0_{\tau}-A^0_\sigma\big|\cF_\sigma\right].
    \end{equation}
    
 For the other inequality, let us define another trading strategy $\varphi=(\varphi^0_t,\varphi^1_t)_{0\leq t\leq T} \in \cA(1)$ by 
    \begin{equation*}
       (\varphi^0_t,\varphi^1_t) := \left\{\begin{array}{cc} 
                                             \left(\lambda+\varepsilon, 0\right),  &  0\leq t <\sigma;  \\
                                             \left((1-\lambda)+\lambda+\varepsilon,-\frac{1}{S_\sigma}\right),  &  \sigma\leq t< \tau_\varepsilon;  \\
                                             \left(V^{liq}_{\tau_\varepsilon}(\varphi), 0 \right), & \tau_\varepsilon\leq t\leq T. 
                                           \end{array}
         \right.
    \end{equation*}     
  With the same argument, we have that for each $\sigma\leq \tau\leq \tau_{\varepsilon}$,
    \begin{equation}\notag
      (1+\varepsilon)S_\sigma\massE\left[A^0_{\tau}-A^0_\sigma\big|\cF_\sigma\right] \geq \massE\left[A^1_{\tau}-A^1_\sigma|\cF_\sigma\right], 
    \end{equation}
    which ends the proof.
\end{proof}

The following corollary is straightforward.
\begin{corollary}\label{slm}
 For each optional strong supermartingale deflator $\big({Y}^0, {Y}^1\big)\in\mathcal{B}(1)$, if the first component ${Y}^0$ is a local martingale, then the second component  ${Y}^1$ is also a local martingale.

\end{corollary}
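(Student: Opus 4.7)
The plan is to exploit the Doob--Meyer--Mertens decomposition recalled in \eqref{Mdecom} together with the estimate \eqref{a0a1}. Assuming that $Y^0$ is a local martingale, uniqueness of the decomposition forces its predictable nondecreasing part to vanish, so $A^0 \equiv 0$. Substituting $A^0 \equiv 0$ into the upper bound of \eqref{a0a1} gives, for every stopping time $\sigma$ and every $\tau$ with $\sigma \leq \tau \leq \tau_\varepsilon$,
$$
0 \;\leq\; \massE\!\left[A^1_\tau - A^1_\sigma \,\big|\, \cF_\sigma\right] \;\leq\; (1+\varepsilon)\, S_\sigma \cdot 0 \;=\; 0,
$$
where the lower bound uses that $A^1$ is nondecreasing. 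Hence $A^1_\tau = A^1_\sigma$ a.s.\ on every interval $\llbracket \sigma, \tau_\varepsilon \rrbracket$.

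Next, I would iterate this step along a sequence of stopping times that exhausts $[0,T]$. Fix $\varepsilon \in (0,\, 1-\lambda]$ and define inductively $\sigma_0 := 0$ and $\sigma_{k+1}$ as the stopping time $\tau_\varepsilon$ of Proposition~\ref{propA} started from $\sigma_k$. By the previous paragraph $A^1_{\sigma_{k+1}} = A^1_{\sigma_k}$ a.s., so iterating yields $A^1_{\sigma_k} = A^1_0 = 0$ for every $k$. Since $S$ is continuous on the compact interval $[0,T]$ by Assumption~\ref{Scont}, the trajectory $t \mapsto S_t(\omega)$ is uniformly continuous for $\massP$-a.e.\ $\omega$; therefore there can only be finitely many disjoint subintervals on which $S$ oscillates by a factor $(1\pm \varepsilon)$, so there exists an a.s.-finite random index $N$ with $\sigma_N = T$. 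This forces $A^1_T = 0$, and since $A^1$ is nondecreasing with $A^1_0 = 0$ we conclude $A^1 \equiv 0$ on $[0,T]$, whence $Y^1 = M^1$ is a local martingale.

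The only delicate point is the passage from the discrete skeleton $\{\sigma_k\}$ back to the full parameter set $[0,T]$: finite termination of $(\sigma_k)$ relies essentially on the continuity of $S$ (Assumption~\ref{Scont}), since without it one could have $\sigma_k \uparrow T$ without ever reaching $T$, leaving a potential jump of $A^1$ at the endpoint unaccounted for. Integrability for the conditional expectations is already handled inside Proposition~\ref{propA} via localization of the c\`adl\`ag local martingales $M^0,M^1$, so no further technicalities are needed. Beyond this one observation, the argument reduces to a direct application of \eqref{a0a1}.
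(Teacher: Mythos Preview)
Your argument is correct and is precisely the ``straightforward'' deduction from Proposition~\ref{propA} that the paper has in mind: use $A^0\equiv 0$ in the upper bound of \eqref{a0a1}, then cover $[0,T]$ by finitely many $\varepsilon$-excursion intervals thanks to the continuity of $S$. The paper does not spell out the covering step, but your use of uniform continuity and strict positivity of $S$ on $[0,T]$ to guarantee $\sigma_N=T$ for an a.s.\ finite $N$ is exactly what is needed.

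One minor remark: the paper also records (in the remark immediately following the corollary) an alternative, more direct route that bypasses Proposition~\ref{propA} entirely, by testing the deflator against the constant strategies $(\varphi^0,\varphi^1)=(1\pm\tfrac{1}{n},\mp\tfrac{1}{n})$ and letting $n\to\infty$; this yields the local martingale property of $Y^1$ from that of $Y^0$ without any decomposition or stopping-time iteration. Your approach has the advantage of making the quantitative relation between $A^0$ and $A^1$ explicit, whereas the remark's approach is shorter but gives no information about the finite-variation parts when $Y^0$ is \emph{not} a local martingale.
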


\begin{remark}
 Actually,  we can still have Corollary \ref{slm}  without Lemma \ref{propA}. By defining an admissible  trading strategy $({\varphi}^0_t,{\varphi}^1_t)_{0 < t \leq T}=(1+\frac{1}{n},-\frac{1}{n})$ and  $({\varphi}^0_t,{\varphi}^1_t)_{0 < t \leq T}=(1-\frac{1}{n},\frac{1}{n}), n \in \NN$,  we can complete the proof.
 \end{remark}

Next, we study the local martingale property of the ODP in the framework of \cite{CSY17}.
Actually, Czichowsky et al. found in \cite{CSY17} a sufficient condition which guarantees this local martingale property of ODPs when $S$ is continuous, 
  that is, the liquidation value process of OPPs is a.s.~strictly positive, i.e., $\inf_{\tT} {\widehat V}^{liq}_t(\widehat \varphi) > 0$, a.s. 
We summarize this result from \cite{CSY17} as the following proposition.

\begin{proposition}[{\cite[Proposition 3.3]{CSY17}}]\label{doplm}
 Fix the level $0<\la <1$ of transaction costs. 
 We assume that Assumption \ref{Scont} and all assumptions for Theorem \ref{thhurt} are satisfied. 
 In addition, we suppose that the liquidation value process of the optimal primal process $\widehat \varphi = (\wh \varphi^0_t,\wh \varphi^1_t)_{\tT} \in \cA(x)$ is strictly positive, i.e.,
  \begin{equation}  \label{spodp}
   \inf_{\tT}\widehat V^{liq}_t(\widehat \varphi):=  \inf_{\tT} \left\{ \widehat \varphi^0_t + (1- \la) (\wh \varphi^1_t)^+ S_t - (\wh \varphi^1_t)^- S_t \right\} > 0, \ a.s.
  \end{equation}
 Let $y=u'(x)$,  then there exists a local $\lambda$-consistent price system $\widehat{Z}\in \mathcal{Z}^{loc, \lambda}$, such that $y\widehat Z^0_T=\widehat{h}$.
\end{proposition}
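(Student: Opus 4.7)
The plan is to take the ODP $\widehat{Y}=(\widehat{Y}^0,\widehat{Y}^1)\in\cB(y)$ furnished by Theorem \ref{thhurt}(ii), normalize it as $\widehat{Z}:=\widehat{Y}/y$, and show that $\widehat{Z}$ itself is a local $\lambda$-consistent price system. The requirement $\widehat{Y}^1/\widehat{Y}^0\in[(1-\lambda)S,S]$ is already built into the definition of $\cB(y)$, and strict positivity of $\widehat{Z}^0$ on $[0,T]$ will follow from $\widehat{Y}^0_T=\widehat{h}>0$ once the local martingale property is established, since a nonnegative local martingale with a strictly positive terminal value is strictly positive throughout. By Corollary \ref{slm}, the entire task therefore reduces to showing that $\widehat{Y}^0$ is a $\massP$-local martingale.

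I would first write the Doob-Meyer-Mertens decompositions $\widehat{Y}^0 = M^0 - A^0$ and $\widehat{Y}^1 = M^1 - A^1$, with $A^0,A^1$ nondecreasing, predictable, l\`adl\`ag, and vanishing at zero, so the goal becomes $A^0\equiv 0$. The main ingredient will be Theorem \ref{thhurt}(iii): for the OPP $\widehat{\varphi}$, the process $\widehat{\varphi}^0\widehat{Y}^0+\widehat{\varphi}^1\widehat{Y}^1$ is a true $\massP$-martingale, whereas for any neighboring admissible strategy the analogous process is only a supermartingale. I plan to exploit this gap between the martingale identity at the optimizer and the supermartingale inequality nearby, together with the buffer provided by (\ref{spodp}).

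Concretely, I would localize via $\rho_n:=\inf\{t:\widehat V^{liq}_t(\widehat{\varphi})<1/n\}\wedge T$, which by (\ref{spodp}) satisfies $\rho_n\uparrow T$ almost surely. On $[0,\rho_n]$ the strategy $\widehat{\varphi}$ carries a liquidation-value buffer of at least $1/n$, so for $\varepsilon<1/n$ I would construct a $\lambda$-self-financing admissible perturbation $\varphi^\varepsilon\in\cA(x)$ that shifts $\varepsilon$ units of wealth between bond and stock between a stopping time $\sigma$ and the stopping time $\tau_\varepsilon$ from Proposition \ref{propA}. The supermartingale inequality for $\varphi^{\varepsilon,0}\widehat{Y}^0+\varphi^{\varepsilon,1}\widehat{Y}^1$, combined with the martingale identity for $\widehat{\varphi}^0\widehat{Y}^0+\widehat{\varphi}^1\widehat{Y}^1$ and the two-sided sandwich from Proposition \ref{propA}, should then force $\massE[A^0_{\tau_\varepsilon}-A^0_\sigma]\leq 0$ on arbitrary stopping intervals, whence $A^0\equiv 0$ on $[0,\rho_n]$; letting $n\to\infty$ extends this to $[0,T]$. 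The main obstacle will be the precise design of $\varphi^\varepsilon$: it must remain admissible within the $1/n$ buffer and be matched to the time-change $\tau_\varepsilon$ so that the sandwich for $A^1$ transfers cleanly into a contradiction against nontriviality of $A^0$. A natural alternative route would be to realize $\widehat{Y}/y$ as a Fatou-type limit of a minimizing sequence in $\cZ^{loc,\lambda}$ for $v(y)$ and to show that strict positivity of $\widehat V^{liq}(\widehat{\varphi})$ rules out any singular nondecreasing part surviving in the limit. Either way, once $A^0\equiv 0$ is obtained, Corollary \ref{slm} yields $A^1\equiv 0$, and $\widehat{Z}=\widehat{Y}/y$ is the desired local $\lambda$-consistent price system.
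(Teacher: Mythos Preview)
Your overall strategy---take an ODP $\widehat{Y}\in\cB(y)$, localize using the buffer \eqref{spodp}, and exploit the gap between the martingale identity for $\widehat{\varphi}$ and the supermartingale inequality for a nearby admissible strategy---is exactly the right idea, and it would go through. However, you are making the perturbation step considerably more complicated than necessary by routing it through Proposition~\ref{propA} and a bond-to-stock transfer on $\llbracket\sigma,\tau_\varepsilon\rrbracket$.

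The paper does not prove Proposition~\ref{doplm} directly (it is quoted from \cite{CSY17}, whose proof does indeed follow the route you sketch, via the sandwich \eqref{a0a1}); instead it proves the stronger Proposition~\ref{doplm1} with a much shorter argument that completely bypasses Proposition~\ref{propA}. The trick is that the perturbation need not touch the stock position at all. After localizing so that $\widehat{V}^{liq}_{t\wedge\rho_m}\geq 1/m$, the strategy $\bigl(\widehat{\varphi}^0_{\cdot\wedge\rho_m}-\tfrac{1}{m},\,\widehat{\varphi}^1_{\cdot\wedge\rho_m}\bigr)$ is already admissible in $\cA(x)$; hence
\[
\Bigl(\widehat{\varphi}^0_{\cdot\wedge\rho_m}-\tfrac{1}{m}\Bigr)\widehat{Y}^0+\widehat{\varphi}^1_{\cdot\wedge\rho_m}\widehat{Y}^1
\]
is an optional strong supermartingale. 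Subtracting this from the martingale $\widehat{\varphi}^0\widehat{Y}^0+\widehat{\varphi}^1\widehat{Y}^1$ (Theorem~\ref{thhurt}(iii)) shows that $\tfrac{1}{m}\widehat{Y}^0$ is a submartingale on $\llbracket 0,\rho_m\rrbracket$. Since $\widehat{Y}^0$ is also a supermartingale, it is a martingale up to $\rho_m$, hence a local martingale; Corollary~\ref{slm} then handles $\widehat{Y}^1$. No Doob--Meyer--Mertens decomposition, no control of $A^1$ versus $A^0$, and no delicate design of $\varphi^\varepsilon$ are needed.

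So your plan is not wrong, but the bond-only perturbation replaces your entire middle paragraph with two lines and simultaneously upgrades the conclusion from ``there exists an ODP that is a local martingale'' to ``every ODP is a local martingale''.
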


\begin{remark}
 The continuity of the price process $S$ is essential in the above proposition. Otherwise, counterexamples could be found, e.g., in \cite{CMKS14}.
 Very recently, it is proved in the paper \cite{CPSY16} that \eqref{spodp} holds if we assume the condition of ``two way crossing" (TWC) (see Bender \cite{Ben12} for definition).
This condition implies that $S$ satisfies ($CPS^{\mu}$) locally for all $0 < \mu < 1$. (See the proof of Theorem 2.3 in \cite{CPSY16}.)  
\end{remark}

The proof of the proposition above in \cite{CSY17} is based on the strict positivity of the liquidation value of OPP and \cite[Lemma 3.5]{CSY17} (see also Lemma \ref{propA}). 
In the present paper, we give a proposition which is a little stronger than \cite[Proposition 3.3]{CSY17}, however, its proof is much reduced. 
Indeed, the result of Proposition \ref{doplm} constructs such ODP which is a couple of local martingales and this is sufficient for the construction of a shadow price process, 
    whereas the proposition below states that all ODPs are local martingales in the same framework. 
This statement is meaningful in a frictional context, since unlike the frictionless case, the ODP is no longer unique (a counterexample will be discussed later).

\begin{proposition}\label{doplm1}
 Fix the level $0<\la <1$ of transaction costs. 
 Under the assumptions of Proposition \ref{doplm}, let $y=u'(x)$ and $\widehat{Y}\in \mathcal{B}(y)$ be the optional supermartingale deflator associated with the dual optimizer in the sense that $\widehat{Y}^0_T=\widehat{h}$ a.s. 
 Then, $\widehat{Y}$ is a local {martingale.} 
 In other words, all $\lambda$-optional supermartingale deflator associated with the dual optimizer is a local $\lambda$-consistent price system.
\end{proposition}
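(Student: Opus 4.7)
The goal is to show that in the Doob-Meyer-Mertens decomposition $\widehat{Y}^0=M^0-A^0$ of \eqref{Mdecom}, the predictable nondecreasing part $A^0$ vanishes identically; once this is done, Corollary \ref{slm} upgrades $\widehat{Y}^1$ to a local martingale as well, so that $\widehat{Y}$ is a local $\lambda$-consistent price system. Compared with Proposition \ref{doplm}, which produces \emph{some} local martingale couple realising the dual optimum, the content here is that \emph{every} ODP shares this property; and this is non-trivial precisely because the ODP itself is not unique in general.

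The lever is the martingale property of $N:=\widehat{\varphi}^0\widehat{Y}^0+\widehat{\varphi}^1\widehat{Y}^1$, granted by Theorem \ref{thhurt}(iii) for any OPP $\widehat{\varphi}$. First, I would argue that by strict monotonicity of $U$, optimality forces the self-financing constraint to hold with equality,
$$d\widehat{\varphi}^0=-S\,d\widehat{\varphi}^{1,\uparrow}+(1-\lambda)S\,d\widehat{\varphi}^{1,\downarrow},$$
for otherwise a strict inequality would allow enlarging $\widehat{\varphi}^0_T$ and hence $\widehat{V}^{liq}_T(\widehat{\varphi})$, contradicting the optimality of $\widehat{\varphi}$. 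Then, invoking an integration-by-parts formula suited to the l\`adl\`ag strong supermartingale $\widehat{Y}$ and the predictable finite-variation process $\widehat{\varphi}$ (see e.g.\ \cite{Mer72,DM82}), and isolating the finite-variation part of the martingale $N$, I expect to arrive at the pointwise identity
$$\widehat{\varphi}^0_{-}\,dA^0+\widehat{\varphi}^1_{-}\,dA^1=\widehat{Y}^0\,\bigl[(\widetilde{S}-S)\,d\widehat{\varphi}^{1,\uparrow}+((1-\lambda)S-\widetilde{S})\,d\widehat{\varphi}^{1,\downarrow}\bigr],$$
where $\widetilde{S}:=\widehat{Y}^1/\widehat{Y}^0\in[(1-\lambda)S,S]$; in particular, the right-hand side is visibly non-positive.

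To conclude I would use Proposition \ref{propA}. Letting $\varepsilon\downarrow 0$ in \eqref{a0a1} along a nested family of stopping times $(\tau_\varepsilon)$ shows that $dA^0$ and $dA^1$ are mutually absolutely continuous with Radon-Nikodym density $\bar{S}:=dA^1/dA^0\in[(1-\lambda)S,S]$. Substituting into the identity above, the left-hand side becomes $(\widehat{\varphi}^0+\widehat{\varphi}^1\bar{S})\,dA^0\geq\widehat{V}^{liq}(\widehat{\varphi})\,dA^0$, because a short case analysis on the sign of $\widehat{\varphi}^1$ gives $\widehat{\varphi}^0+\widehat{\varphi}^1\bar{S}\geq\widehat{V}^{liq}(\widehat{\varphi})$ for every $\bar{S}\in[(1-\lambda)S,S]$. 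Since \eqref{spodp} guarantees $\widehat{V}^{liq}(\widehat{\varphi})>0$ pathwise, the left-hand side is $\geq 0$ while the right-hand side is $\leq 0$; both sides must therefore vanish, forcing $dA^0\equiv 0$ and completing the argument.

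The main technical obstacle I anticipate is the rigorous extraction of the pointwise density $\bar{S}$ from the conditional-expectation bounds \eqref{a0a1}; this calls for a Lebesgue-differentiation-type argument adapted to predictable nondecreasing processes. The cleanest workaround is likely to avoid the density altogether by writing the identity in integral form over each slice $[\sigma,\tau_\varepsilon]$, taking $\cF_\sigma$-conditional expectations, invoking \eqref{a0a1} together with the near-constancy of $S$ on such slices, and only then letting $\varepsilon\downarrow 0$. A secondary delicacy is the integration-by-parts step for a l\`adl\`ag strong supermartingale against a predictable finite-variation process, which requires careful bookkeeping of predictable jumps in the Mertens decomposition.
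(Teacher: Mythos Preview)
Your overall strategy is sound and would in principle work, but it is a genuinely different route from the paper's, and it is considerably heavier. In fact your approach is close in spirit to the one in \cite{CSY17} that the paper is expressly trying to \emph{simplify}; the authors advertise that their ``proof is much reduced'' compared to \cite[Proposition~3.3]{CSY17}, and they achieve this by bypassing Proposition~\ref{propA} and the Mertens integration-by-parts entirely.

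The paper's argument is a one-line perturbation trick on the cash component. Since $S$ is continuous and $\inf_{0\le t\le T}\widehat V^{liq}_t(\widehat\varphi)>0$ a.s., one may find stopping times $\varrho_m\uparrow\infty$ with $\widehat V^{liq}_{t\wedge\varrho_m}\ge \tfrac1m$. The (stopped) strategy $\bigl(\widehat\varphi^0_{\cdot\wedge\varrho_m}-\tfrac1m,\ \widehat\varphi^1_{\cdot\wedge\varrho_m}\bigr)$ is again in $\mathcal A(x)$, so by the deflator property $\bigl(\widehat\varphi^0_{\cdot\wedge\varrho_m}-\tfrac1m\bigr)\widehat Y^0+\widehat\varphi^1_{\cdot\wedge\varrho_m}\widehat Y^1$ is an optional strong supermartingale. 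Subtracting this from the \emph{martingale} $\widehat\varphi^0\widehat Y^0+\widehat\varphi^1\widehat Y^1$ (Theorem~\ref{thhurt}(iii)) shows that $\tfrac1m\widehat Y^0$ is an optional strong \emph{sub}martingale on $\llbracket 0,\varrho_m\rrbracket$; since it is also a supermartingale, $\widehat Y^0_{\cdot\wedge\varrho_m}$ is a martingale, hence $\widehat Y^0$ is a local martingale, and Corollary~\ref{slm} finishes.

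What each approach buys: the paper's trick uses only the definition of $\mathcal B(y)$ and the martingale identity from duality, with no integration by parts, no density extraction, and no appeal to Proposition~\ref{propA}; it is short and robust. Your approach is more analytic and would also prove the stronger pointwise statement that $dA^0\equiv 0$ via the finite-variation bookkeeping, but the two obstacles you yourself flag---the l\`adl\`ag integration-by-parts with predictable jump bookkeeping, and the passage from the conditional-expectation sandwich \eqref{a0a1} to an honest Radon--Nikodym density $\bar S\in[(1-\lambda)S,S]$---are precisely the technicalities the paper designed its argument to avoid. Your sketch does not yet close those gaps, so as written it is a plausible plan rather than a complete proof.
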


\begin{proof} 
The stock price process $S$ is continuous and thus $S$ is locally bounded. 
 We may assume without loss of generality that the liquidation value of the OPP $(\widehat V_t^{liq})_{0 \leq t \leq T}$ is predictable (otherwise, consider the c\`agl\`ad version of the trading strategy), 
   then from $\inf_{\tT}\wh V_t^{liq} > 0$, we can find a sequence of a.s.~increasing and diverging stopping times $\sequ{\varrho}{m}$, 
   such that, for each $m\in \mathbb{N}$ and $t\in[0,T]$, we have $\widehat V^{liq}_{t \wedge \varrho_m}\geq \frac{1}{m}$,
    $$\tfrac{1}{m} \leq (1- \lambda)S_{t \wedge \varrho_m} \leq S_{t \wedge \varrho_m}.$$
 Fix the optimal trading strategy $\big(\widehat{\varphi}^0(x),\widehat{\varphi}^1(x)\big)\in\mathcal{A}(x)$. 
 It is easy to verify that $\big(\widehat{\varphi}^0_{\cdot\wedge \varrho_m}-\frac{1}{m},\widehat{\varphi}^1_{\cdot\wedge \varrho_m}\big)\in\mathcal{A}(x)$. 
 Thus, 
  $$ \left(\widehat{\varphi}^0_{\cdot\wedge \varrho_m}-\frac{1}{m}\right)\widehat{Y}^0+\widehat{\varphi}^1_{\cdot\wedge \varrho_m}\widehat{Y}^1 $$
    is an optional strong supermartingale. 
 Recall that $\widehat \varphi^0\widehat{Y}^0+\widehat{\varphi}^1\widehat{Y}^1$ is a martingale, so that $\widehat{Y}^0$ is an optional strong submartingale up to $\varrho_m$. 
 As $\widehat{Y}\in \mathcal{B}(y)$, $\widehat{Y}^0_{\cdot\wedge \varrho_m}$ has to be a martingale. So $\widehat{Y}^0$ is a local martingale and by Corollary \ref{slm}, $\widehat{Y}^1$ is again a local martingale.
\end{proof}

\begin{remark}
Similar to the frictionless case, an ODP is always a couple of local martingales when $S$ is continuous. 
It is to say that whenever the liquidation value is strictly positive, the first coordinate of the ODP will not ``lose'' its mass. 
This can be seen if we proceed a similar argument as the \cite[Appendix]{GLY16} to prove the above proposition. 
It suffices to replace the wealth process by the  liquidation value process.
\end{remark}

Unlike the frictionless case, we do NOT have the uniqueness for the ODPs, even for its first coordinate. 
The non-uniqueness of the second coordinate ODPs is an easy observation. 
It may already occur in the setting of finite $\Omega$ as was observed in \cite[Example 2.4]{Sch17}. 
For the non-uniqueness of the first coordinate of ODPs, we gratefully acknowledge Walter Schachermayer for the following counterexample (see also \cite{Sch17}).
 
\begin{example}[Walter Schachermayer] 
Let $W=(W_t)_{t\geq 0}$ be an $(\cF^W_t)_{t\geq 0}$-Brownian motion, where $(\cF^W_t)_{t\geq 0}$ is the natural filtration generated by $W$ satisfying the  usual conditions. 
We assume that the investor's utility is characterized by a logarithmic function $U(x)=\log(x)$.  
In what follows, we first construct two frictionless markets driven by two different processes $\widehat{S}$ and $\check{S}$, and eventually find a market $S$ with transaction costs, such that $\widehat{S}$ and $\check{S}$ are both shadow price processes for this logarithmic utility maximization problem when trading with $S$. 
Moreover, we prove that $\widehat{Z}=(\widehat{S}^{-1}, 1)$ and $\check{Z}=(\check{S}^{-1}, 1)$ are two ODPs for such problem, but $\widehat{Z}\neq \check{Z}$. 
The construction is divided into three steps: 

\vspace{3mm}

\noindent {\bf Step 1:}
Define
$$N_t=\exp (W_t +\tfrac{t}{2}), \qquad t\geq 0.$$
Fix the level of transaction costs $0<\lambda<\frac{1}{2}$ and define
 $$ \tau^\lambda =\inf\left\{t: \ N_t=2(1-\lambda)\right\}. $$
Let $\wh S$ be a time-changed restriction of $N$ on the stochastic interval $\llbracket 0, \tau^\lambda\rrbracket$, i.e.,
 $$\wh{S}_t = N_{\tan\left(\frac{\pi}{2}t\right) \wedge \tau^\lambda}, \qquad 0 \leq t \leq 1.$$
Consider a frictionless market driven by the process $\widehat{S}$, which is adapted to the time-changed filtration $\mathcal{F}$ defined by $\mathcal{F}_t:=\mathcal{F}^W_{{\tan\left(\frac{\pi}{2}t\right) \wedge \tau^\lambda}}$. 
Then, $\wh{Z}^0:=(\wh{S})^{-1}$ defines a local martingale deflator for $\widehat{S}$. 
Due to Merton's rule, the logarithmic utility maximization problem is solved by the strategy consisting of buying one stock at time $t=0$ at price $\wh{S}_0=1$ and selling it at time $t=1$ at price $\wh{S}_1=2(1-\lambda)$ 
   (we call it buy-hold-sell strategy). 

\vspace{3mm}
   
\noindent {\bf Step 2:} Next we define a perturbation of the process $\wh{S}$ which will be denoted by $\check{S}.$ 
To do so we first define a perturbation of $\wh{Z}^0=(\wh{S})^{-1},$ then  decompose this local martingale into $\wh{Z}^0=\wh{M}+ \wh{P}$, where 
   $$ \wh{M}_t := \mathbb{E}\Big[\wh{Z}^0_1\Big|\cF_t\Big]=\frac{1}{2(1-\lambda)}, \qquad \wh{P}_t:=\wh{Z}_t-\wh{M}_t. $$
Note  that $\wh{P}_0=\frac{1-2\lambda}{2(1-\lambda)}$ and $\wh{P}_1=0.$ 
Therefore, $\wh P$ is a potential. 
Let $\sigma:=\inf \big\{t: \wh{P}_t=1\big\}$, then the stopped local martingale  $\wh{P}^{\sigma}$ is bounded and thus is a martingale. 
Moreover, $\p[\sigma < \infty]= \frac{1-2\lambda}{2(1-\lambda)}$. 
For $\delta>0$ choose an arbitrary $\mathcal{F}_{\sigma}$-measurable function $f$ taking values in $[1-\delta, 1+\delta]$ such that 
   $$ \mathbb{E}\big[f \mathbf{1}_{\{\sigma < \infty\}}\big]=\frac{1-2\lambda}{2(1-\lambda)} $$ 
   and such that $f$ is not identically equal to $1$ on $\{\sigma<\infty\}$. 
Define the potential $\check{P}$ by 
\begin{align}\notag
\check{P}_t = \left\{ 
   \begin{array}{cl}
     \mathbb{E}[f \mathbf{1}_{\{\sigma < \infty\}} \,|\, \mathcal{F}_{t \wedge \sigma}],  \quad & 0 \leq t \leq \sigma,\\
     f \wh{P}_t, & \sigma\leq t\leq 1,
   \end{array}\right.
\end{align}
 which is again a local martingale starting at $\check{P}_0=\frac{1-2\lambda}{2(1-\lambda)}$ and ending at $\check{P}_1=0.$ Note that $0\leq {\check{P}_t}\in [(1-\delta){\wh{P}_t}, (1+\delta){\wh{P}_t}],$ $0 \leq t < 1$, a.s. 
Define $\check{Z}^0:=\wh{M}+ \check{P}$ and $\check{S}:=(\check{Z}^0)^{-1}.$  
Then, the ratio $\frac{\check{S}_t}{\wh{S}_t} \in [(1+\delta)^{-1}, (1-\delta)^{-1}].$  
It is clear that the frictionless market driven by $\check{S}$ has a local martingale deflator $\check{Z}$. 
This market again has the property that the log-optimal strategy is the buy-hold-sell one as before.

\vspace{3mm}

\noindent{\bf Step 3:} Define 
$$m_t=\max (\wh{S}_t, \check{S}_t), \quad M_t=(1-\lambda)^{-1} \min(\wh{S}_t, \check{S}_t).$$
It suffices to assume that $(1-\lambda)(1+\delta) < (1-\delta)$  to have $m_t < M_t$, $0 \leq t \leq 1$, a.s. Define $S$ by
  $$ S_t=(1-t)m_t + t M_t, \qquad 0 \leq t \leq 1, $$
  which is continuous and adapted starting at $S_0=1$ and ending at $S_1=2$. 
Obviously, both $\wh S$ and $\check S$ remain in the bid-ask spread $[(1-\lambda)S,S].$ 
Therefore, it is clear that both $(\wh Z^0, 1)$ and $(\check Z^0, 1)$ are elements in $\cZ^{loc, \lambda}(S)$. 
Since trading for $S$ with transaction costs yields no more than trading for $\wh S$ or $\check S$ frictionlessly, the trading strategy to buy one stock at $t=0$ and to sell it only when $t=1$ is optimal for the frictional logarithmic utility maximization problem with $S$. 
By \eqref{Y=U'(X)}, we can verify that both $(\wh Z^0, 1)$ and $(\check Z^0, 1)$ induce the dual optimizer $\wh{h}=\frac{1}{2(1-\lambda)}$. However, $\wh Z^0\neq \check Z^0$.
\end{example}

\subsection{Convergence of optimal dual processes}

 In this subsection, we move to discuss the dynamic stability, which means the stability of the ODPs. 
 For the sake of simplicity, we first consider the utility maximization problem under perturbations of the initial endowment and of the utility function, while we keep the physical probability measure unchanged, i.e., $\p_n \equiv \p$. 
 We assume all conditions of Proposition \ref{doplm} for each utility maximization problem $u(x_n; U_n)$ as well as for the limiting problem $u(x; U)$. From the result in the previous subsection the optimal dual process $\wh Y^n:=(\widehat Y^{n,0}, \widehat Y^{n,1}):=(\widehat Y^{0}(y_n; V_n), \widehat Y^{1}(y_n; V_n))$ is a couple of $\p$-local martingales, i.e.,\ $\wh Y^n\in y_n\mathcal Z^{loc, \lambda}(\massP)$.

\begin{proposition}\label{nopdconv}
We assume all conditions of Proposition \ref{doplm} for the market model and for the utility function in each utility maximization problem $u(x_n; U_n, \massP)$ (for short, $u(x_n; U_n)$) as well as  for the limiting problem $u(x; U, \massP)$ (for short, $u(x; U)$). Moreover, assume that $\lim_{n \to \i}U_n=U$ pointwise and $x_n\rightarrow x>0$.
Then, for a sequence of ODPs $\wh Y^n:=(\widehat Y^{n,0}, \widehat Y^{n,1})$ that associated with the dual optimizers $\wh h^n$ (i.e., $\wh Y^{n, 0}_T=\wh h^n$) corresponding to the dual problems $v(y_n; V_n)$, 
   there exists a couple of local martingales $\wh Y:=(\widehat Y^{0}, \widehat Y^{1})$ and a subsequence of convex combinations of $\{\wh Y^n\}_{n=0}^\infty$, denoted still by $\{\wh Y^n\}_{n=0}^\infty$, 
   such that for every $[0,T]$-valued stopping time $\tau$,
     $$\big(\widehat{Y}^{n, 0}_{\tau},\widehat{Y}^{n, 1}_{\tau}\big)\stackrel{\massP}{\longrightarrow}\big(\widehat{Y}^{0}_{\tau},\widehat{Y}^{1}_{\tau}\big).$$
  Moreover, the couple $\wh Y\in y\mathcal{Z}^{loc, \lambda}$ is an ODP corresponding to the limiting dual problem $v(y; V)$ with $\wh Y^0_T=\wh h$.
\end{proposition}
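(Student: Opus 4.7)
The plan is to combine the static stability from Theorem \ref{tconv} with the Kom\'los-type compactness for optional strong supermartingale deflators underlying the definition of $\cB(1)$ in \cite{CS16strong}, and then to upgrade the resulting limit from a deflator to a local $\lambda$-consistent price system by invoking Proposition \ref{doplm1}. As a first step, I would apply Proposition \ref{doplm1} to each problem $v(y_n; V_n)$ to deduce that every $\wh Y^n = (\wh Y^{n,0}, \wh Y^{n,1})$ is already a pair of $\p$-local martingales, so that $\wh Y^n \in y_n\cZ^{loc,\lambda}(\p) \subseteq \cB(y_n)$. Since $y_n \to y > 0$, the renormalized sequence $y_n^{-1}\wh Y^n$ lies in $\cB(1)$; the closedness/compactness result underlying the construction of $\cB(1)$ in \cite{CS16strong} then provides a sequence of forward convex combinations $\widetilde Y^n \in \conv(\wh Y^n, \wh Y^{n+1}, \ldots)$ together with an optional strong supermartingale deflator $\wh Y \in \cB(y)$ such that
$$ \big(\widetilde Y^{n,0}_{\tau}, \widetilde Y^{n,1}_{\tau}\big) \stackrel{\massP}{\longrightarrow} \big(\wh Y^0_{\tau}, \wh Y^1_{\tau}\big) \quad \text{for every } [0,T]\text{-valued stopping time } \tau. $$
The bid-ask sandwich $Y^1 \in Y^0 \cdot [(1-\la)S, S]$ is a convex condition and is preserved under the mode of convergence (\ref{eq039}), so the limit indeed stays in $\cB$; I relabel $\widetilde Y^n$ as $\wh Y^n$.

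Next I would identify $\wh Y^0_T$ with the dual optimizer $\wh h$ of the limiting problem. By Theorem \ref{tconv} we have $\wh h^n = \wh Y^{n,0}_T \to \wh h$ in $L^0(\p)$, and this convergence is inherited by the forward convex combinations. Specializing the display above to $\tau = T$ gives also $\widetilde Y^{n,0}_T \to \wh Y^0_T$ in probability, so uniqueness of $L^0$-limits forces $\wh Y^0_T = \wh h$ a.s. Consequently $\Es{V(\wh Y^0_T)} = \Es{V(\wh h)} = v(y; V)$, so $\wh Y \in \cB(y)$ attains the dual optimum and is therefore an ODP for the limiting problem. Finally, since the hypotheses of Proposition \ref{doplm1} are in force for the limiting problem as well, that ODP $\wh Y$ must itself be a pair of local martingales, i.e.\ $\wh Y \in y\cZ^{loc,\lambda}(\p)$, which is the second assertion.

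The principal obstacle I anticipate is the very first step: verifying that the uniform integrability/boundedness hypotheses underlying the Kom\'los-type compactness in \cite{CS16strong} actually hold along our sequence. These should follow from the fact that each $\wh Y^{n,0}$ is a non-negative $\p$-local martingale starting from the bounded value $y_n$ (hence a supermartingale, in particular $L^1$-bounded at any stopping time) together with the sandwich control of the second coordinate through the first. A secondary subtlety is that joint, rather than only coordinatewise, convergence of the pairs $(\wh Y^{n,0}, \wh Y^{n,1})$ is needed, but this is already built into applying the closedness statement of \cite{CS16strong} directly to pairs in $\cB(1)$.
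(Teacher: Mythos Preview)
Your proposal is correct and follows essentially the same route as the paper: apply the Kom\'los-type compactness result \cite[Theorem 2.7]{CS16strong} to the sequence $\wh Y^n\in y_n\cZ^{loc,\lambda}\subseteq\cB(y_n)$ to obtain a limit $\wh Y\in\cB(y)$ in the sense of \eqref{conver}, identify $\wh Y^0_T=\wh h$ via the static stability of Theorem \ref{tconv}, and then upgrade $\wh Y$ to a local martingale using Proposition \ref{doplm1}. Your additional remarks on renormalization, preservation of the bid-ask sandwich, and the joint convergence of the two coordinates are all handled implicitly in the paper by the direct citation of \cite[Theorem 2.7]{CS16strong}.
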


\begin{proof} 
 Since for each $n$, $\wh Y^n\in y_n \mathcal{Z}^{loc, \lambda}\subseteq \mathcal{B}(y_n)$, 
 then by \cite[Theorem 2.7]{CS16strong}, there exists a subsequence of convex combinations of  $\seq{\widehat Y}{n}$, still denoted by $\seq{\widehat Y}{n}$
   and an optional strong supermartingale $\widehat Y=(\widehat Y^{0},\widehat Y^{1})$, for every stopping time $\tau$ taking values in $[0,T]$,
    \begin{align}  \label{conver}
       (\widehat Y^{n,0}_{\tau},\widehat Y^{n,1}_{\tau})  \stackrel{\massP}{\longrightarrow} (\widehat Y^{0}_{\tau},\widehat Y^{1}_{\tau}).
    \end{align}
 From the result of Theorem \ref{tconv}, in particular \eqref{stabde} and (\ref{stabop}), we have $\wh Y^{n,0}_T=\wh h^n\stackrel{\massP}{\longrightarrow} \wh h=\wh Y^0_T$. 
 It is obvious that $\wh Y\in \mathcal{B}(y)$ and thus it is an ODP corresponding to the utility maximization problem $u(x; U)$.
 By applying Proposition \ref{doplm1}, the constructed couple $\widehat Y=(\widehat Y^{0},\widehat Y^{1})$ is a local martingale.
\end{proof}

\begin{remark}
 If we assume only the conditions in Theorem \ref{thhurt} for each problem $u(x_n; U_n)$ and the limiting one $u(x; U)$, then the results in the above proposition hold without the local martingale property.  
\end{remark}

\begin{remark}\label{rem1}
For the more general case including perturbations on the physical probability measure, we assume the same as in the above proposition. In addition, we suppose that for each $n\in \NN$,  $\p_n \sim \p $ and $\lims{n}{\massP} = \massP$ in total variation.
For each $n$, we solve a utility maximization problem under $\p_n$, and pick up an ODP $\big(\widehat{Y}^{0}(y_n; V_n, \p_n),\widehat{Y}^{1}(y_n; V_n,\p_n)\big)\in y_n\mathcal{Z}^{loc,\lambda}(\p_n)$ which is a couple of $\p_n$-local martingales. 
By change of measure, we obtain 
   \begin{align}(\widetilde{Y}^{n, 0}, \widetilde{Y}^{n,1})&:=\left({\widehat{Y}^{0}(y_n; V_n, \p_n)} \widetilde{Z}^n, {\widehat{Y}^{1}(y_n; V_n, \p_n)}
      \widetilde{Z}^n \right)
\in y_n\mathcal{Z}^{loc,\lambda}(\p)\subseteq\mathcal{B}(y_n, \p).\label{def1}\end{align}
For each $n$, the processes $(\widetilde{Y}^{n, 0}, \widetilde{Y}^{n,1})$ are $\p$-local martingales.
Moreover, from the fact $\widetilde{Z}^n_T \to 1$ in $L^1(\p)$ and Theorem \ref{tconv}, we have 
                 $$\widehat{Y}^{0}_T(y_n; V_n, \p_n)  \frac{d\p_n}{d \p} \stackrel{\massP}{\longrightarrow} \wh h(y, V, \p)=\widehat{Y}^{0}_T(y; V, \p),$$
                 where the right-hand side is the dual optimizer for $v(y, V, \p)$.
Thus, again by \cite[Theorem 2.7]{CS16strong}, there exists a subsequence of convex combinations of $\{(\widetilde{Y}^{n, 0},\widetilde{Y}^{n, 1})\}_{n\in \mathbb{N}}$, denoted also by $\{(\widetilde{Y}^{n, 0},\widetilde{Y}^{n, 1})\}_{n\in \mathbb{N}}$, such that for every stopping time $\tau$ taking values in $[0,T]$,
    \begin{align}\label{sen1}\big(\widetilde{Y}^{n, 0}_{\tau},\widetilde{Y}^{n, 1}_{\tau}\big)\stackrel{\massP}{\longrightarrow}\big(\widehat{Y}^{0}_{\tau},\widehat{Y}^{1}_{\tau}\big).
    \end{align}
Any such limiting process $\big(\widehat{Y}^{0},\widehat{Y}^{1}\big) \in y\cZ^{loc, \lambda}(\p)$ is an ODP for $v(y; V,\p)$, by applying Proposition \ref{doplm1}, which is a couple of $\p$-local martingale.
\end{remark}

\begin{remark}
  Recall that we assume the usual conditions for the filtration throughout this paper, then every ODP discussed in this subsection has a c\`adl\`ag modification. This  property is crucial for the construction of a shadow market.
\end{remark}

\subsection{Construction of shadow price processes}

This subsection is devoted to the construction of shadow price processes for the limiting utility maximization problem by a sequence of shadow price processes corresponding to the problem with perturbations. 
Similar to the previous subsection, we first consider the case when $\p_n \equiv \p$.

\begin{proposition}\label{nopspconv} 
We assume all conditions in Proposition \ref{nopdconv} for each utility maximization problem $u(x_n; U_n)$ as well as for the limiting problem $u(x; U)$. 
For each $n$, suppose that $\widehat S^n = \widehat S(y_n; V_n)$ is a shadow price process corresponding to the $n$-th problem, which can be represented by $\widehat S^n = \frac{\widehat Y^{n, 1}}{\widehat Y^{n, 0}}$, 
  where $\wh Y^n:=\big(\wh Y^{n,0}, \wh Y^{n,1}\big)\in \mathcal{B}(y_n)$ is a $\p$-local martingale. 
Then, there exists a subsequence of convex combinations of $\big\{\wh Y^n\big\}_{n=0}^\infty$ and a limiting process $\wh Y\in \mathcal{B}(y)$ such that $\wh Y^n$ converges to $\wh Y$ in the sense of \eqref{conver}. 
Moreover, $\widehat S := \frac{\widehat Y^{1}}{\widehat Y^{0}}$ defines a shadow price process for the limiting problem $u(x; U)$.
\end{proposition}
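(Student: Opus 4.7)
The plan is to reduce the statement to the two already-established building blocks: the dynamic stability of ODPs (Proposition \ref{nopdconv}) and the construction of a shadow price from a local-martingale ODP (Proposition \ref{shadow}). Indeed, by Proposition \ref{spstru} the assumed shadow prices $\widehat S^n$ are already represented as $\widehat S^n=\widehat Y^{n,1}/\widehat Y^{n,0}$ for ODPs $\widehat Y^n \in \mathcal{B}(y_n)$, and by Proposition \ref{shadow} (together with the standing assumption that the hypotheses of Proposition \ref{doplm} hold for each $n$) these $\widehat Y^n$ are couples of $\p$-local martingales, hence elements of $y_n\mathcal{Z}^{loc,\lambda}$. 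So I can feed the sequence $\{\widehat Y^n\}$ directly into Proposition \ref{nopdconv}.

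First, applying Proposition \ref{nopdconv} to $\{\widehat Y^n\}$ yields a subsequence of convex combinations, still denoted $\{\widehat Y^n\}$, and an ODP $\widehat Y=(\widehat Y^0,\widehat Y^1)\in y\mathcal{Z}^{loc,\lambda}$ for the limiting problem $v(y;V)$ such that $(\widehat Y^{n,0}_\tau,\widehat Y^{n,1}_\tau)\to(\widehat Y^0_\tau,\widehat Y^1_\tau)$ in $\p$-probability for every $[0,T]$-valued stopping time $\tau$, and moreover $\widehat Y^0_T=\widehat h(y;V)$. Since the assumptions of Proposition \ref{doplm} are in force for the limiting problem, Proposition \ref{doplm1} applies and guarantees that $\widehat Y$ is a couple of $\p$-local martingales. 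In particular $\widehat Y^0$ is a nonnegative local martingale with terminal value $\widehat h(y;V)>0$ a.s., so $\widehat Y^0>0$ up to evanescence and the ratio $\widehat S:=\widehat Y^1/\widehat Y^0$ is a well-defined strictly positive semimartingale.

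Second, by the definition of $\mathcal{B}(y)$, $\widehat Y^1=\widehat Y^0\widetilde S$ for some $[(1-\lambda)S,S]$-valued process $\widetilde S$, whence $\widehat S=\widetilde S$ takes its values in the bid-ask spread. The optimality relation $u'(x)=y$ is inherited from the limiting problem by Theorem \ref{tconv}, so $\widehat Y$ satisfies exactly the hypotheses of Proposition \ref{shadow}. Invoking that proposition, $\widehat S=\widehat Y^1/\widehat Y^0$ is a shadow price for the limiting utility maximization problem $u(x;U)$, which is the claim.

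No step is genuinely hard once Propositions \ref{nopdconv}, \ref{doplm1}, and \ref{shadow} are in hand; the only point that deserves care is verifying that the limiting process indeed has all three structural features needed by Proposition \ref{shadow}, namely membership in $\mathcal{B}(y)$, the local martingale property of both components, and the terminal identification $\widehat Y^0_T=\widehat h(y;V)$. The first two are provided by Proposition \ref{nopdconv} and Proposition \ref{doplm1}, while the terminal identification follows from the $L^0$-stability of dual optimizers in \eqref{stabop}. Note that taking convex combinations is harmless here, because the class $y_n\mathcal{Z}^{loc,\lambda}$ (and $\mathcal{B}(y_n)$) is convex and the representation of a shadow price through Proposition \ref{spstru} only requires the existence of some ODP, not uniqueness. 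In the more general setting where $\p_n\neq\p$, the same argument goes through with $\widehat Y^n$ replaced by the $\p$-local martingales $\widetilde Y^n=\widehat Y^n\widetilde Z^n$ from Remark \ref{rem1}.
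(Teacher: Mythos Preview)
Your proof is correct and follows essentially the same route as the paper: apply Proposition \ref{nopdconv} to the sequence $\{\widehat Y^n\}$ to obtain the limiting local-martingale ODP $\widehat Y\in y\mathcal{Z}^{loc,\lambda}$, and then invoke the shadow-price construction (Proposition \ref{shadow}) to conclude. The paper's own proof is a one-line reference to Propositions \ref{nopdconv} and \ref{spstru}; you have simply unpacked the details (verifying membership in $\mathcal{B}(y)$, the local martingale property via Proposition \ref{doplm1}, and the terminal identification $\widehat Y^0_T=\widehat h$), which is entirely in line with the intended argument.
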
 

\begin{proof}
  This proposition can be proved by simply applying  Proposition \ref{nopdconv} and Proposition \ref{spstru}.
\end{proof}

\begin{remark}
For the more complicated situation with perturbations on the physical probability measure, we turn to the setting of Remark \ref{rem1}.
By Proposition \ref{spstru}, for each $n$, there exists a shadow price process $\wh S^n$ for the utility maximization problem $u(x_n; U_n, \p_n)$, which can be represented by some  ODP corresponding to this problem, i.e., 
 $$ \widehat S^n(x_n; U_n, \p_n)=\frac{\widehat{Y}^{1}(y_n; V_n, \p_n)} {\widehat{Y}^{0}(y_n; V_n, \p_n)}  = \frac{\wt{Y}^{1}(y; V_n, \p)} {\wt{Y}^{0}(y; V_n, \p)}, $$
where the sequence $\{(\widetilde{Y}^{n, 0}, \widetilde{Y}^{n,1})\}_{n\in \mathbb{N}}$ is from \eqref{def1}.
Then, any limiting process $(\wh{Y}^{0},\wh{Y}^{1})$ constructed as in Remark \ref{rem1} in the sense \eqref{sen1} defines a shadow price process $\widehat{S}(x; U, \p):=\frac{\wh Y^1}{\wh Y^0}$ for the limiting problem $u(x; U, \p)$.
\end{remark}

\bibliography{stability}

\begin{thebibliography}{BCKMK13}

\bibitem[BCKMK13]{BCKMK13}
G.~Benedetti, L.~Campi, J.~Kallsen, and J.~Muhle-Karbe.
\newblock {On the existence of shadow prices}.
\newblock {\em Finance and Stochastics}, 17(4):801--818, 2013.

\bibitem[Ben12]{Ben12}
C.~Bender.
\newblock {Simple arbitrage}.
\newblock {\em Annals of Applied Probability}, 22(5):2067--2085, 2012.

\bibitem[BK13]{BK13}
E.~Bayraktar and R.~Kravitz.
\newblock {Stability of exponential utility maximization with respect to market
  perturbations}.
\newblock {\em Stochastic Processes and their Applications}, 123(5):1671--1690,
  2013.

\bibitem[BM03]{BM03}
B.~Bouchard and L.~Mazliak.
\newblock {A multidimensional bipolar theorem in $L^0(\mathbb
  R^d;\Omega;\mathcal F; \bf P)$}.
\newblock {\em Stochastic Processes and their Applications}, 107(2):213--231,
  2003.

\bibitem[Bou02]{Bou02}
B.~Bouchard.
\newblock {Utility maximization on the real line under proportional transaction
  costs}.
\newblock {\em Finance and Stochastics}, 6(4):495--516, 2002.

\bibitem[CK96]{CK96}
J.~Cvitani{\'c} and I.~Karatzas.
\newblock {Hedging and portfolio optimization under transaction costs: a
  martingale approach}.
\newblock {\em Mathematical Finance}, 6(2):133--165, 1996.

\bibitem[CMKS14]{CMKS14}
C.~Czichowsky, J.~Muhle-Karbe, and W.~Schachermayer.
\newblock {Transaction costs, shadow prices, and duality in discrete time}.
\newblock {\em SIAM Journal on Financial Mathematics}, 5(1):258--277, 2014.

\bibitem[CO11]{CO11}
L.~Campi and M.P. Owen.
\newblock {Multivariate utility maximization with proportional transaction
  costs}.
\newblock {\em Finance and Stochastics}, 15(3):461--499, 2011.

\bibitem[CPSY17]{CPSY16}
C.~Czichowsky, R.~Peyre, W.~Schachermayer, and J.~Yang.
\newblock {Shadow prices, fractional Brownian motion, and portfolio
  optimisation under transaction costs}.
\newblock {\em Preprint}, 2017.

\bibitem[CR07]{CR07}
L.~Carassus and M.~R{\'a}sonyi.
\newblock {Optimal strategies and utility-based prices converge when agents'
  preferences do}.
\newblock {\em Mathematics of Operations Research}, 32(1):102--117, 2007.

\bibitem[CS06]{CS06}
L.~Campi and W.~Schachermayer.
\newblock {A super-replication theorem in Kabanov's model of transaction
  costs}.
\newblock {\em Finance and Stochastics}, 10(4):579--596, 2006.

\bibitem[CS16a]{CS16duality}
C.~Czichowsky and W.~Schachermayer.
\newblock {Duality theory for portfolio optimisation under transaction costs}.
\newblock {\em Annals of Applied Probability}, 26(3):1888--1941, 2016.

\bibitem[CS16b]{CS16strong}
C.~Czichowsky and W.~Schachermayer.
\newblock {Strong supermartingales and limits of non-negative martingales}.
\newblock {\em Annals of Probability}, 44(1):171--205, 2016.

\bibitem[CSY17]{CSY17}
C.~Czichowsky, W.~Schachermayer, and J.~Yang.
\newblock {Shadow prices for continuous processes}.
\newblock {\em to appear in Mathematical Finance}, 2017.

\bibitem[CW01]{CW01}
J.~Cvitani{\'c} and H.~Wang.
\newblock {On optimal terminal wealth under transaction costs}.
\newblock {\em Journal of Mathematical Economics}, 35(2):223--231, 2001.

\bibitem[DM82]{DM82}
C.~Dellacherie and P.A. Meyer.
\newblock {\em {Probabilities and {P}otential. {B}. {T}heory of
  {M}artingales}}.
\newblock North-Holland Publishing Co., Amsterdam, 1982.

\bibitem[DPT01]{DPT01}
G.~Deelstra, H.~Pham, and N.~Touzi.
\newblock {Dual formulation of the utility maximization problem under
  transaction costs.}
\newblock {\em Annals of Applied Probability}, 11(4):1353--1383, 2001.

\bibitem[DS94]{DS94}
F.~Delbaen and W.~Schachermayer.
\newblock {A general version of the fundamental theorem of asset pricing}.
\newblock {\em Mathematische Annalen}, 300(3):463--520, 1994.

\bibitem[Fre13]{Fre13}
C.~Frei.
\newblock {Convergence results for the indifference value based on the
  stability of BSDEs}.
\newblock {\em Stochastics}, 85(3):464--488, 2013.

\bibitem[GLY16]{GLY16}
L.~Gu, Y.~Lin, and J.~Yang.
\newblock {On the dual problem of utility maximization in incomplete markets}.
\newblock {\em Stochastic Processes and their Applications}, 126(4):1019--1035,
  2016.

\bibitem[Gua02]{Gua02}
P.~Guasoni.
\newblock {Optimal investment with transaction costs and without
  semi-martingales.}
\newblock {\em Annals of Applied Probability}, 12(4):1227--1246, 2002.

\bibitem[Gua06]{Gua06}
P.~Guasoni.
\newblock {No arbitrage under transaction costs, with fractional Brownian
  motion and beyond}.
\newblock {\em Mathematical Finance}, 16(3):569--582, 2006.

\bibitem[JK95]{JK95}
E.~Jouini and H.~Kallal.
\newblock {Martingales and arbitrage in securities markets with transaction
  costs}.
\newblock {\em Journal of Economic Theory}, 66(1):178--197, 1995.

\bibitem[JN04]{JN04}
E.~Jouini and C.~Napp.
\newblock {Convergence of utility functions and convergence of optimal
  strategies}.
\newblock {\em Finance and Stochastics}, 8(1):133--144, 2004.

\bibitem[KMK11]{KMK11}
J.~Kallsen and J.~Muhle-Karbe.
\newblock {Existence of shadow prices in finite probability spaces}.
\newblock {\em Mathematical Methods of Operations Research}, 73(2):251--262,
  2011.

\bibitem[KS99]{KS99}
D.~Kramkov and W.~Schachermayer.
\newblock {The asymptotic elasticity of utility functions and optimal
  investment in incomplete markets}.
\newblock {\em Annals of Applied Probability}, 9(3):904--950, 1999.

\bibitem[Kv11]{KZ11}
C.~Kardaras and G.~\v{Z}itkovi{\'c}.
\newblock {Stability of utility maximization problem with random endowment in
  incomplete markets}.
\newblock {\em Mathematical Finance}, 21(2):313--333, 2011.

\bibitem[Lar09]{Lar09}
K.~Larsen.
\newblock {Continuity of utility - maximization with respect to preferences}.
\newblock {\em Mathematical Finance}, 19(2):237--250, 2009.

\bibitem[Lv07]{LZ07}
K.~Larsen and G.~\v{Z}itkovi{\'c}.
\newblock {Stability of utility-maximization in incomplete markets}.
\newblock {\em Stochastic Processes and their Applications},
  117(11):1642--1662, 2007.

\bibitem[Mer72]{Mer72}
J.-F. Mertens.
\newblock {Th{\'e}orie des processus stochastiques g{\'e}n{\'e}raux
  applications aux surmartingales}.
\newblock {\em Zeitschrift f{\"u}r Wahrscheinlichkeitstheorie und Verwandte
  Gebiete}, 22(1):45--68, 1972.

\bibitem[MW13]{MW13}
M.~Mocha and N.~Westray.
\newblock {The stability of the constrained utility maximization problem: a
  BSDE approach}.
\newblock {\em SIAM Journal on Financial Mathematics}, 4(1):117--150, 2013.

\bibitem[Roc70]{Roc70}
R.~T. Rockafellar.
\newblock {\em {Convex analysis}}.
\newblock Princeton university press, 1970.

\bibitem[RW98]{RW98}
R.T. Rockafellar and R.J.-B. Wets.
\newblock {\em {Variational analysis}}.
\newblock Springer-Verlag, Berlin, 1998.

\bibitem[Sch17]{Sch17}
W.~Schachermayer.
\newblock {\em {The asymptotic theory of transaction costs}}.
\newblock {Lecture notes Universit{\"a}t Wien \& ITS-ETH Z{\"u}rich}. European
  Mathematical Society, Zurich, 2017.

\bibitem[SW77]{SW77}
G.~Salinetti and R.J.-B. Wets.
\newblock {On the relations between two types of convergence of convex
  functions}.
\newblock {\em Journal of Mathematical Analysis and Applications},
  60(1):211--226, 1977.

\bibitem[Wes16]{Wes16}
K.~Weston.
\newblock {Stability of utility maximization in nonequivalent markets}.
\newblock {\em Finance and Stochastics}, 20(2):511--541, 2016.

\bibitem[Xin17]{Xin17}
H.~Xing.
\newblock {Stability of the exponential utility maximization problem with
  respect to preferences}.
\newblock {\em Mathematical Finance}, 27(1):38--67, 2017.

\bibitem[Yu17]{Yu17}
X.~Yu.
\newblock {Optimal consumption under habit formation in markets with
  transaction costs and random endowments}.
\newblock {\em The Annals of Applied Probability}, 27(2):960--1002, 2017.

\end{thebibliography}
\bibliographystyle{alpha}

\end{document}